\def\PubType{Conference}
\newcommand{\F}[1]{\ensuremath{\mathbb{F}_{#1}}}
\newcommand{\Fq}{\F{q}}
\newcommand{\Fx}[1]{\ensuremath{\F{#1}[X]}}
\newcommand{\Fqx}{\Fx{q}}
\newcommand{\Fxy}[1]{\ensuremath{\F{#1}[X,Y]}}
\newcommand{\Fqxy}{\Fxy{q}}
\newcommand{\wdeg}[2]{\operatorname{wdeg}_{#1,#2}}
\newcommand{\yC}[2]{#1_{[#2]}} 
\newcommand{\RS}[2]{\ensuremath{\mathcal{GRS}(#1,#2)}}
\newcommand{\weight}[1]{\operatorname{weight}(#1)}
\newcommand{\defeq}{\triangleq}
\newcommand{\half}{\tfrac 1 2}
\newcommand{\diag}{\mathrm{diag}} 
\newcommand{\LT}[1]{\textnormal{\footnotesize LT}(#1)}
\newcommand{\LP}[1]{\textnormal{\footnotesize LP}(#1)}
\newcommand{\OD}[1]{\operatorname{D}(#1)}
\newcommand{\ZZ}{\mathbb Z}
\newcommand{\NN}{\mathbb N}
\newcommand{\M}[2][\empty]{
  \ifthenelse{\equal{#1}{\empty}}
    {\ensuremath{\mathcal{#2}}}
    {\ensuremath{\mathcal{#2}_{#1}}}
}
\newcommand{\Mod}[2][\empty]{
  \ifthenelse{\equal{#1}{\empty}}
    {\ensuremath{#2}}
    {\ensuremath{#2_{#1}}}
}
\newcommand{\T}[1]{^{(#1)}} 
\newcommand{\sell}[1][s,\ell]{{#1}}  
\newcommand{\sellI}{\sell[s,\ell+1]}
\newcommand{\sellII}{\sell[s+1,\ell+1]}
\newcommand{\stepI}{^\mathrm I}
\newcommand{\stepII}{^\mathrm{II}}
\def\intermediator#1{\hat{#1}}
\def\is{\intermediator s}
\def\iell{\intermediator \ell}
\def\itau{\intermediator \tau}
\def\iQ{\intermediator Q}
\newcommand{\isell}[1][\is,\iell]{\sell[#1]}
\def\var#1{\textup{\textsf{#1}}}
\newcommand{\printalgos}[1]{\begin{minipage}{0.9\textwidth}\begin{center}\begin{algorithm}[H]#1
\end{algorithm}\end{center}\end{minipage}}
\def\confType{Conference} 
\def\jourType{Journal}
\definecolor{orange}{rgb}{1,0.5,0}
  \newcommand{\todo}[1]{}
  \newcommand{\todo}[1]{}
  \newcommand{\todo}[1]{{\color{red}[#1]}}
\newtheorem{thm}{Theorem}
\newtheorem{cor}[thm]{Corollary}
\newtheorem{lem}[thm]{Lemma}
\newtheorem{exa}[thm]{Example} 
\newtheorem{defi}[thm]{Definition} 
\renewcommand{\tilde}{\widetilde}
\begin{document}

\def\thetitle{Multi-Trial Guruswami--Sudan Decoding for Generalised Reed--Solomon Codes}
\title{\thetitle}
\titlerunning{\thetitle}

\author{Johan S.~R. Nielsen \and Alexander Zeh}
\authorrunning{J.~S.~R.~Nielsen \and A.~Zeh}
\institute{J.~S.~R.~Nielsen is with the Institute of Mathematics, Technical University of Denmark \thanks{TODO: This shows up in any place?} \\ 
\email{\texttt{j.s.r.nielsen@mat.dtu.dk}} \\
A.~Zeh is with the Institute of Communications Engineering, University of Ulm, Germany \thanks{This work was supported by the German Research Council "Deutsche Forschungsgemeinschaft" (DFG) under Grants Bo867/22-1.} and Research Center INRIA Saclay - \^{I}le-de-France, \'{E}cole Polytechnique, France\\
\email{\texttt{alex@codingtheory.eu}}
}

\date{\today}
\maketitle

\begin{abstract}
An iterated refinement procedure for the Guruswami--Sudan list decoding algorithm for Generalised Reed--Solomon codes based on Alekhnovich's module minimisation is proposed.
The method is parametrisable and allows variants of the usual list decoding approach.
In particular, finding the list of \emph{closest} codewords within an intermediate radius can be performed with improved average-case complexity while retaining the worst-case complexity.
\end{abstract}

\keywords{Guruswami--Sudan \and List Decoding \and Reed--Solomon Codes \and Multi-Trial}

\section{Introduction}
\todo{/Johan: We have very un-systematic use of cdot in matrix multiplications. I vote for removing all of them}

Since the discovery of a polynomial-time hard-decision list decoder for Generalised Reed--Solomon (GRS) codes by Guruswami and Sudan (GS)~\cite{sudan97,guruSudan99} in the late 1990s, much work has been done to speed up the two main parts of the algorithm: interpolation and root-finding.
Notably, for interpolation Beelen and Brander~\cite{beelenBrander} mixed the module reduction approach by Lee and O'Sullivan~\cite{leeOSullivan08} with the parametrisation of Zeh~\textit{et al.}~\cite{zeh11interpol}, and employed the fast module reduction algorithm by Alekhnovich~\cite{alekhnovich05}.
Bernstein~\cite{bernstein11} pointed out that a slightly faster variant can be achieved by using the reduction algorithm by Giorgi~\textit{et al.}~\cite{giorgi03}.

For the root-finding step, one can employ the method of Roth and Ruckenstein~\cite{rothRuckenstein00} in a divide-and-conquer fashion, as described by Alekhnovich~\cite{alekhnovich05}.
This step then becomes an order of magnitude faster than interpolation, leaving the latter as the main target for further optimisations.

For a given code, the GS algorithm has two parameters, both positive integers: the interpolation multiplicity $s$ and the list size $\ell$.
Together with the code parameters they determine the decoding radius $\tau$. To achieve a higher decoding radii for some given GRS code, one needs higher $s$ and $\ell$, and the value of these strongly influence the running time of the algorithm.

In this work, we present a novel iterative method: we first solve the interpolation problem for $s = \ell = 1$ and then iteratively refine this solution for increasing $s$ and $\ell$. In each step of our algorithm, we obtain a valid solution to the interpolation problem for these intermediate parameters.
The method builds upon that of Beelen--Brander~\cite{beelenBrander} and has the same asymptotic complexity.

\def\inter#1{\hat #1}
The method therefore allows a fast \emph{multi-trial} list decoder when our aim is just to find the list of codewords with minimal distance to the received word.
At any time during the refinement process, we will have an interpolation polynomial for intermediate parameters $\inter s \leq s$, $\inter \ell \leq \ell$ yielding an intermediate decoding radius $\inter \tau\leq \tau$.
If we perform the root-finding step of the GS algorithm on this, all codewords with distance at most $\inter \tau$ from the received are returned; if there are any such words, we break computation and return those; otherwise we continue the refinement.
We can choose any number of these trials, e.g. for each possible intermediate decoding radius between half the minimum distance and the target $\tau$.

Since the root-finding step of GS is cheaper than the interpolation step, this multi-trial decoder will have the same asymptotic worst-case complexity as the usual GS using the Beelen--Brander interpolation; however, the average-case complexity is better since fewer errors are more probable.

This contribution is structured as follows.
In the next section we give necessary preliminaries and state the GS interpolation problem for decoding GRS codes.
In Section~\ref{sec_modulemini} we give a definition and properties of minimal matrices.
Alekhnovich's algorithm can bring matrices to this form, and we give a more fine-grained bound on its asymptotic complexity.
Our new iterative procedure is explained in detail in Section~\ref{sec_iterative}.

\section{Preliminaries} \label{sec_prelim}
\subsection{Notation}\label{ssec_notation}
Let $\Fq$ be the finite field of order $q$ and let $\Fqx$ be the polynomial ring over 
$\Fq$ with indeterminate $X$. Let $\Fqxy$ denote the polynomial ring in the variables $X$ and $Y$ 
and let $\wdeg{u}{v} X^iY^j \defeq ui + vj$ be the $(u,v)$-weighted degree of $X^iY^j$.

A vector of length $n$ is denoted by $\vec{v} = (v_0, \dots, v_{n-1})$.
If $\vec{v}$ is a vector over $\Fqx$, let $\deg \vec{v} \defeq \max_i\{ \deg v_i(X) \}$.
We introduce the leading position as $\LP{\vec{v}}= \max_i\{ i | \deg v_i(X) = \deg \vec{v} \}$ and the leading term $\LT{\vec{v}} = v_{\LP{\vec{v}}}$ is the term at this position.
An $m \times n$ matrix is denoted by $\M{V}=\| v_{i,j}\|_{i=0,j=0}^{m-1,n-1}$.
The rows of such a matrix will be denoted by lower-case letters, e.g. $\vec v_0, \ldots, \vec v_{m-1}$.
Furthermore, let $\deg \M{V} = \sum_{i=0}^{m-1} \deg \vec{v}_i$.
Modules are denoted by capital letters such as $M$.

\subsection{Interpolation-Based Decoding of GRS Codes} \label{ssec_ipdecoding}
Let $\alpha_0, \dots, \alpha_{n-1}$ be $n$ nonzero distinct elements 
of $\Fq$ with $n < q$ and let $w_0,\dots,w_{n-1}$ be $n$ (not necessarily distinct) nonzero elements of $\Fq$.
A GRS code \RS{n}{k} of length $n$ and dimension $k$ over $\Fq$ is given by
\begin{align} \label{eq_defRScode}
  \RS{n}{k} \defeq \big\{ (w_0 f(\alpha_{0}), \dots, w_{n-1}f(\alpha_{n-1})) : f(X) \in \Fqx,\, \deg f(X) < k \big\}.
\end{align}
GRS codes are \emph{Maximum Distance Separable} (MDS) codes, i.e., their minimum Hamming distance is $d=n-k+1$.
We shortly explain the interpolation problem of GS~\cite{guruSudan99,sudan97} for decoding GRS codes in the following.
\begin{thm}[Guruswami--Sudan for GRS Codes~\cite{guruSudan99,sudan97}] \label{thm_GSproblem}
Let $\vec{c} \in \RS{n}{k}$ be a codeword and $f(X)$ the corresponding information polynomial as defined in~\eqref{eq_defRScode}. Let $\vec{r} = (r_0,\dots,r_{n-1}) = \vec{c} + \vec{e}$ be a received word where $\weight{\vec e} \leq \tau$. Let $r_i^{\prime}$ denote $r_{i}/w_{i}$.

Let $Q(X,Y) \in \Fqxy$ be a nonzero polynomial that passes through the $n$ points $(\alpha_0,r_0^{\prime}),$ $ \dots, (\alpha_{n-1},r_{n-1}^{\prime})$ with multiplicity $s \geq 1$, has $Y$-degree at most $\ell$, and $\wdeg{1}{k-1} Q(X,Y) < s(n-\tau)$.
Then $(Y-f(X)) \mid Q(X,Y)$.
\end{thm}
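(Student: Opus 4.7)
The plan is to reduce the statement to the univariate polynomial $g(X) \defeq Q(X, f(X)) \in \Fqx$ and argue that it must vanish identically. Once this is established, viewing $Q$ as an element of $(\Fqx)[Y]$ and performing polynomial division of $Q(X,Y)$ by the monic linear polynomial $Y - f(X)$, the remainder equals $Q(X, f(X)) = 0$, so $(Y - f(X)) \mid Q(X,Y)$.

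First I would bound $\deg g(X)$ from above. Since $\deg f(X) < k$, any monomial $X^i Y^j$ appearing in $Q$ contributes to $g(X)$ a polynomial of degree at most $i + j(k-1) = \wdeg{1}{k-1}(X^i Y^j)$. Summing over the support of $Q$ gives
\[
  \deg g(X) \leq \wdeg{1}{k-1} Q(X,Y) < s(n-\tau)
\]
by the weighted-degree hypothesis. Next I would give a matching lower bound on the order of vanishing of $g$. Let $\mathcal I \subseteq \{0, \ldots, n-1\}$ be the set of positions where $\vec c$ and $\vec r$ agree; since $\weight{\vec e} \leq \tau$, we have $|\mathcal I| \geq n-\tau$, and for each $i \in \mathcal I$ it holds that $r_i^{\prime} = c_i/w_i = f(\alpha_i)$. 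The multiplicity-$s$ condition at $(\alpha_i, r_i^{\prime})$ states that the shifted polynomial $Q(X+\alpha_i, Y+r_i^{\prime})$ has no nonzero monomials of total degree less than $s$; equivalently,
\[
  Q(X,Y) = \sum_{a+b\geq s} q_{a,b}^{(i)} (X-\alpha_i)^a (Y - f(\alpha_i))^b.
\]
Substituting $Y = f(X)$ turns each factor $Y - f(\alpha_i)$ into $f(X) - f(\alpha_i)$, which is divisible by $X-\alpha_i$. Hence every surviving term is divisible by $(X-\alpha_i)^s$, so $(X-\alpha_i)^s \mid g(X)$. Because the $\alpha_i$ are distinct, this yields $\prod_{i\in \mathcal I}(X-\alpha_i)^s \mid g(X)$, forcing $\deg g \geq s(n-\tau)$ unless $g = 0$.

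Combining the two bounds gives the required $g(X) = 0$, completing the argument. The only step that requires genuine care is the multiplicity-to-divisibility passage: it hinges on fixing a precise definition of multiplicity (either via vanishing Hasse derivatives or, as above, via the vanishing of low-degree coefficients in the translated polynomial) and then observing that $(X - \alpha_i)$ divides $f(X) - f(\alpha_i)$. Everything else is a degree count.
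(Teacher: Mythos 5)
Your proof is correct and is precisely the standard argument from the Guruswami--Sudan literature~\cite{guruSudan99,sudan97}: bound $\deg Q(X,f(X))$ above by the weighted degree, bound it below via the multiplicity-$s$ vanishing at the $\geq n-\tau$ agreement positions, conclude $Q(X,f(X))=0$, and use monic division by $Y-f(X)$ over $\Fqx[Y]$. The paper itself does not reprove this theorem but simply cites it, so there is no divergence to report; your handling of the multiplicity-to-divisibility step (via the translated-polynomial definition of multiplicity together with $(X-\alpha_i)\mid f(X)-f(\alpha_i)$) is exactly how the cited sources do it.
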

One can easily show that a polynomial $Q(X,Y)$ that fulfils the above conditions can be constructed whenever $E(s,\ell,\tau) > 0$, where
\begin{equation}\label{eqn_Edef}
  E(s,\ell,\tau) \defeq (\ell+1)s(n-\tau) - \tbinom{\ell+1} 2 (k-1) - \tbinom{s+1} 2 n
\end{equation}
is the difference between the maximal number of coefficients of $Q(X,Y)$, and the number of homogeneous linear equations on $Q(X,Y)$ specified by the interpolation constraint.
This determines the maximal number of correctable errors, and one can show that satisfactory $s$ and $\ell$ can always be chosen whenever $\tau < n-\sqrt{n(k-1)}$ (for $n\rightarrow \infty$ see e.g.~\cite{guruSudan99}).
\begin{defi}[Permissible Triples] \label{def_permiss}
An integer triple $(s,\ell,\tau) \in (\ZZ_+)^3$ is \emph{permissible} if $E(s,\ell,\tau) > 0$.
\end{defi}
We define also the decoding radius-function $\tau(s,\ell)$ as the greatest integer such that $(s,\ell,\tau(s,\ell))$ is permissible.

It is easy to show that $E(s,\ell,\tau) > 0$ for $s > \ell$ implies $\tau < \lfloor \frac {n-k} 2 \rfloor$, which is half the minimum distance.
Therefore, it never makes sense to consider $s > \ell$, and in the remainder we will always assume $s \leq \ell$.
Furthermore, we will also assume $s, \ell \in O(n^2)$ since this e.g.~holds for any $\tau$ for the closed-form expressions in \cite{guruSudan99}.

\begin{inJournal}
Let us illustrate the above. The following will be a running example throughout the article.
\begin{exa}
  \label{ex_rs164param}
  A $\RS{16}{4}$ code over \F{17} can uniquely correct $\tau_0 = (n-k)/2=6$ errors; unique decoding corresponds to $s_0 = \ell_0 = 1$ and one can confirm that $E(1, 1, 6) > 0$. 
  To attain a decoding radius $\tau_1 = 7$, one can choose $s_1 = 1$ and $\ell_1 = 2$ in order to obtain a permissible triple.
  Also $(1, 3, 7)$ is permissible, though less interesting since it does not give improved decoding radius.
  However, one finds e.g. $(2, 4, 8)$ and $(28, 64, 9)$ are permissible.
  For greater decoding radii, no triples are permissible.
\end{exa}
\end{inJournal}
\begin{inJournal}
\todo{The following is not true}
The following lemma is useful for our multi-trial decoding procedure of Section~\ref{sec_iterative}.
\begin{lem}
  \label{lem_permSdiff}
  If $(s,\ell,\tau)$ and $(\is, \iell, \itau)$ are both permissible triples as in Definition~\ref{def_permiss} such that $\itau > \tau$ and $\is - s > \iell - \ell$, then also $(s + (\iell-\ell), \iell, \itau)$ is permissible.
\end{lem}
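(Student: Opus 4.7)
The plan is to exploit the fact that $\phi(s') := E(s', \iell, \itau)$ is a downward-opening parabola in $s'$ (leading coefficient $-n/2$) and to compare $\phi(s+a)$ with $\phi(\is)$, where $a = \iell - \ell$. Setting $c := (\is - s) - (\iell - \ell) \geq 1$ by hypothesis, a direct expansion using the identity $\tbinom{x+c+1}{2} - \tbinom{x+1}{2} = cx + \tbinom{c+1}{2}$ gives
\[
E(s+a, \iell, \itau) \;=\; E(\is, \iell, \itau) \;+\; c\bigl[(s+a)n - (\iell+1)(n-\itau)\bigr] \;+\; \tbinom{c+1}{2}\,n.
\]
The last term is strictly positive and $E(\is, \iell, \itau) > 0$ by hypothesis, so it would suffice to argue that the middle bracket is nonnegative, i.e.\ that $s+a$ sits at or past the vertex of $\phi$.

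If $(s+a)n \geq (\iell+1)(n-\itau)$ holds, the conclusion is immediate, using only the second permissibility hypothesis. In the complementary case, $s+a$ lies on the increasing branch of $\phi$, and the first hypothesis $E(s,\ell,\tau) > 0$ must be brought in. Rewriting it equivalently as $n - \tau > \frac{\ell(k-1)}{2s} + \frac{(s+1)n}{2(\ell+1)}$ and combining with $\itau > \tau$ and $c \geq 1$, the plan is to extract a lower bound strong enough to offset the shortfall $(\iell+1)(n-\itau) - (s+a)n$ using the strictly positive slack $\tbinom{c+1}{2}n$ produced by the gap between $\is - s$ and $\iell - \ell$.

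The main obstacle is precisely this second case: the two permissibility hypotheses live at different parameter points $(\ell,\tau)$ and $(\iell,\itau)$, so the shift has to be tracked via cross-terms such as $a\ell(k-1)$ and $as\cdot n$ produced by the expansions of $\tbinom{\iell+1}{2}$ and $\tbinom{s+a+1}{2}$. I would expect to split this case further on whether $\is$ itself lies on the increasing or decreasing branch of $\phi$, since in the latter sub-case monotonicity of $\phi$ between the vertex and $\is$ already closes the argument. The crux is to verify that the strictness $c \geq 1$ provided by $\is - s > \iell - \ell$ is large enough to absorb the correction terms coming from the parameter shift; if this fails to hold in some regime, so does the lemma.
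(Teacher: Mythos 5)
Your expansion
\[
E(s+a,\iell,\itau)=E(\is,\iell,\itau)+c\bigl[(s+a)n-(\iell+1)(n-\itau)\bigr]+\tbinom{c+1}{2}n,
\qquad a=\iell-\ell,\ c=(\is-s)-(\iell-\ell),
\]
is correct, and your treatment of the case $(s+a)n\geq(\iell+1)(n-\itau)$ is fine; it matches (and in fact corrects) the paper's intended computation, which uses the same decomposition $s+(\iell-\ell)=\is-\Delta$ but relies on the wrong binomial identity ($\binom{a+1}{2}+\binom{b}{2}-ab$ equals $\binom{a-b+1}{2}$, not $\binom{a-b}{2}$). The paper's own proof is flagged by the authors with a \texttt{TODO} reading ``The following is not true,'' and the lemma together with its proof lives only in the unreleased journal block.

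Your hedge at the end of the second case is exactly the right instinct: the shortfall cannot be absorbed, and the lemma as stated is \emph{false}. The hypothesis $\is-s>\iell-\ell$ does not rule out $\iell=\ell$, in which case the conclusion becomes $E(s,\ell,\itau)>0$ with $\itau>\tau$, i.e.\ a strictly larger decoding radius at the \emph{same} $(s,\ell)$ --- which generically fails. Concretely, take $n=10$, $k=2$ (so $k-1=1$), $(s,\ell,\tau)=(1,9,4)$ and $(\is,\iell,\itau)=(3,9,6)$. Then
\[
E(1,9,4)=10\cdot 6-45-10=5>0,\qquad E(3,9,6)=30\cdot 4-45-60=15>0,
\]
so both triples are permissible, $\itau=6>4=\tau$, and $\is-s=2>0=\iell-\ell$; yet
\[
E\bigl(s+(\iell-\ell),\,\iell,\,\itau\bigr)=E(1,9,6)=10\cdot 4-45-10=-15<0,
\]
so $(1,9,6)$ is not permissible. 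In your notation this sits squarely in the second case, with $(s+a)n=10$ far below $(\iell+1)(n-\itau)=40$ and $c=2$, and the penalty $c\bigl[(s+a)n-(\iell+1)(n-\itau)\bigr]=-60$ overwhelms $E(\is,\iell,\itau)+\binom{c+1}{2}n=15+30=45$. So there is no lower bound of the kind you were hoping to extract from $E(s,\ell,\tau)>0$; the lemma needs an extra hypothesis (e.g.\ forcing $\iell>\ell$ and likely tighter control on how $\tau$ may grow) to have a chance of holding, and the paper never supplies one.
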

\begin{proof}
  \def\D{\Delta}
  Let $\D = (\is-s) - (\iell-\ell) > 0$. Then
  \[
    E(s + (\iell-\ell), \iell, \itau) = E(\is - \D, \iell, \itau)
    = (\is - \D)(\ell+1)(n-\itau) - \tbinom {\iell+1} 2 (k-1) - \tbinom {\is - \D} 2 n.
  \]
  We want to show that this is greater than 0.
  One can easily verify that for $a,b \in \ZZ_+$ with $a > b$ then $\binom {a-b} 2 = \binom {a+1} 2 + \binom b 2 - ab$.
  Thus the above equals \todo{This is not true since the equation is unfixed}
  \begin{align*}
    & \is(n-\itau) - \tbinom {\iell+1} 2 (k-1) - \tbinom{\is+1} 2 n - \D(n-\itau) - \tbinom \D 2 n + \is n\D
    \\& = 
    E(\is,\iell,\itau) + \D\big(\itau + n(\is - \half(\D-1) - 1 ) \big).
  \end{align*}
  Since by assumption $E(\is,\iell,\itau) > 0$ and $\is - \half\D \geq s + \half\D > \half$, the above expression is greater than 0.
  \qed
\end{proof}
\end{inJournal}

\subsection{Module Reformulation of Guruswami--Sudan} \label{ssec_modreform}
Let $\Mod[\sell]{M} \subset \Fqxy$ denote the space of all bivariate polynomials passing through the points $(\alpha_0,r_0^{\prime}),\ldots,(\alpha_{n-1},r_{n-1}^{\prime})$ with multiplicity $s$ and with $Y$-degree at most $\ell$.
We are searching for an element of $\Mod[\sell]{M}$ with low $(1,k-1)$-weighted degree.

Following the ideas of Lee and O'Sullivan~\cite{leeOSullivan08}, we can first remark that $\Mod[\sell]{M}$ is an $\Fqx$ module.
Second, we can give an explicit basis for $\Mod[\sell]{M}$.
Define first two polynomials $G(X) = \prod_{i=0}^{n-1}(X-\alpha_i)$ as well as $R(X)$ as the Lagrange polynomial going through the points $(\alpha_i, r_i^\prime)$ for $i = 0, \ldots, n-1$.
Denote by $\yC Q t(X)$ the $Y^t$-coefficient of $Q(X,Y)$ when $Q$ is regarded over $\Fqx[Y]$.
\begin{lem}\label{lem_QdivG}
  Let $Q(X,Y) \in \Mod[\sell]{M}$. Then $G(X)^{s-t} \mid \yC Q t(X)$ for $t < s$.
\end{lem}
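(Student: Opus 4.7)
The plan is the classical shift-and-separate argument used throughout the Guruswami--Sudan literature: translate all interpolation points onto the $Y$-axis by a shift in $Y$, read off the local divisibility of the $Y$-coefficients directly from the multiplicity condition at each $\alpha_i$, and then glue the local statements together using coprimality of the evaluation points.

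First, I would introduce the shifted polynomial $\tilde Q(X, Y) \defeq Q(X, Y + R(X))$. Since $R$ is the Lagrange interpolant with $R(\alpha_i) = r_i'$, vanishing of $Q$ with multiplicity $s$ at each $(\alpha_i, r_i')$ is equivalent to vanishing of $\tilde Q$ with multiplicity $s$ at each $(\alpha_i, 0)$. Writing $\tilde Q(X, Y) = \sum_t \tilde q_t(X) Y^t$, the lemma then reduces to showing $G(X)^{s-t} \mid \tilde q_t(X)$ for all $t < s$, from which the coefficient statement of the lemma follows by reading the result back through the substitution.

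Next, I would fix an index $i$ and further shift in $X$. By the Hasse-derivative characterisation, the multiplicity-$s$ condition at $(\alpha_i, 0)$ is equivalent to the expansion
\[
  \tilde Q(X + \alpha_i, Y) \;=\; \sum_t \tilde q_t(X + \alpha_i)\, Y^t
\]
containing no monomial $X^a Y^b$ with $a + b < s$. Isolating the $Y^t$-coefficient for each $t < s$ forces $\tilde q_t(X + \alpha_i)$ to have no $X^a$-term with $a < s - t$, equivalently $(X - \alpha_i)^{s-t} \mid \tilde q_t(X)$.

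Finally, because $\alpha_0, \dots, \alpha_{n-1}$ are pairwise distinct in $\Fq$, the linear factors $(X - \alpha_i)$ are pairwise coprime, and so are their $(s-t)$-th powers; multiplying the local divisibilities yields $G(X)^{s-t} = \prod_i (X - \alpha_i)^{s-t} \mid \tilde q_t(X)$, which is the desired conclusion. The only point that requires real care is the translation between "multiplicity $s$" and "no $X^a Y^b$ with $a + b < s$ in the Taylor-shifted polynomial''---a standard property of Hasse derivatives---after which everything reduces to coprimality and bookkeeping.
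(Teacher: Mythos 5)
Your global shift $\tilde Q(X,Y) \defeq Q(X,Y+R(X))$ is a cleaner parametrisation than the paper's simultaneous per-point shift in both $X$ and $Y$, and the core computation showing $G(X)^{s-t}\mid \tilde q_t(X)$ for $t<s$ is airtight. The gap is in the very last step. Writing $Q(X,Y)=\tilde Q(X,Y-R(X))$ gives
$\yC Q t(X) = \sum_{j\geq t}\binom{j}{t}\bigl(-R(X)\bigr)^{j-t}\,\tilde q_j(X)$,
and for $j\geq s$ you have no divisibility information about $\tilde q_j$, while $R$ shares no root with $G$, so nothing forces those summands to carry a factor of $G^{s-t}$. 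The ``reading back through the substitution'' therefore does not go through. Indeed the unrestricted statement is false: already at $s=\ell=1$, the polynomial $Q=Y-R$ lies in $\Mod[1,1]{M}$ and has $\yC Q 0 = -R$, which is not divisible by $G$ unless $R=0$.

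What your shift argument actually proves is the correct, slightly restricted claim: if in addition the $Y$-degree of $Q$ is at most $t$, then $\tilde q_j=0$ for $j>t$, so $\yC Q t = \tilde q_t$ and your conclusion follows verbatim. That restricted version is exactly what is used in the proof of Theorem~\ref{thm_Mbasis}, which only invokes the lemma on remainders $Q\T t$ whose $Y$-degree has already been reduced to $t$. The paper's own argument sketch tacitly relies on the same restriction --- its assertion that $\yC Q t(X+\alpha_i)Y^t$ is the \emph{only} term of $Y$-degree $t$ after expanding $(Y+r_i')^j$ is false whenever $Q$ has $Y^j$-terms with $j>t$ --- so both proofs need the hypothesis $\deg_Y Q\le t$ added, after which your back-translation becomes trivial and the rest of your argument stands.
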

\begin{proof}
  $Q(X,Y)$ interpolates the $n$ points $(\alpha_i, r_i^{\prime})$ with multiplicity $s$, so for any $i$, $Q(X+\alpha_i, Y+r_i^{\prime}) = \sum_{j=0}^t \yC Q j(X+\alpha_j)(Y+r_j^{\prime})^j$ has no monomials of total degree less than $s$.
  Multiplying out the $(Y+r_j^{\prime})^j$-terms, $\yC Q t(X+\alpha_j)Y^t$ will be the only term with $Y$-degree $t$.
  Therefore $\yC Q t (X+\alpha_j)$ can have no monomials of degree less than $s-t$, which implies $(X-\alpha_i) \mid \yC Q t(X)$.
  As this holds for any $i$, we proved the lemma.
  \qed
\end{proof}

\begin{thm}
  \label{thm_Mbasis}
  The module $\Mod[\sell]{M}$ is generated as an $\Fqx$-module by the $\ell+1$ polynomials $P\T i(X,Y) \in \Fqxy$ given by
  \begin{align*}
    P\T t(X,Y) &= G(X)^{s-t}(Y-R(X))^t,         && \textrm{for } 0 \leq t < s,\\
    P\T t(X,Y) &= Y^{t-s}(Y-R(X))^s,            && \textrm{for } s \leq t \leq \ell.
  \end{align*}
\end{thm}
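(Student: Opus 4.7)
The plan is to prove the two standard inclusions: first that each $P\T t$ lies in $\Mod[\sell]{M}$, and second that every element of $\Mod[\sell]{M}$ is an $\Fqx$-linear combination of the $P\T t$.

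For the containment $P\T t \in \Mod[\sell]{M}$, the $Y$-degree bound is immediate in both cases ($t \le \ell$). For the multiplicity condition at a point $(\alpha_i, r_i')$, I would shift variables by $X \mapsto X + \alpha_i$ and $Y \mapsto Y + r_i'$ and count the total degree of each monomial in the shifted polynomial. The key facts are: $(X - \alpha_i) \mid G(X)$, so $G(X+\alpha_i)^{s-t}$ has every monomial of $X$-degree at least $s-t$; and $R(\alpha_i) = r_i'$, so $Y - R(X)$ becomes $Y - \bigl(R(X+\alpha_i) - r_i'\bigr)$, a polynomial with no constant term in the shifted variables, whence $(Y - R(X))^t$ has every monomial of total degree at least $t$. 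Multiplying the two yields total degree $\ge s$ in the first case, and for the second case ($s \le t \le \ell$) the factor $Y^{t-s}$ contributes nonnegatively, so $(Y-R(X))^s$ alone already guarantees total degree $\ge s$.

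For the spanning property, I would argue by a top-down reduction on the $Y$-degree. Given $Q \in \Mod[\sell]{M}$, process the index $t$ from $\ell$ down to $0$, maintaining the invariant that after step $t$ the current polynomial lies in $\Mod[\sell]{M}$ and has $Y$-degree strictly less than $t$. This invariant is preserved because $\Mod[\sell]{M}$ is an $\Fqx$-module and we only ever subtract an $\Fqx$-multiple of some $P\T{t'}$. At each step there are two cases:
\begin{itemize}
\item If $t \ge s$, then $P\T t = Y^{t-s}(Y-R(X))^s$ has $Y$-degree exactly $t$ with leading $Y$-coefficient $1$, so we can subtract $\yC Q t(X)\, P\T t$ to zero out the $Y^t$-coefficient.
\item If $t < s$, then since the current polynomial has $Y$-degree $\le t < s$ and still belongs to $\Mod[\sell]{M}$, Lemma~\ref{lem_QdivG} gives $G(X)^{s-t} \mid \yC Q t(X)$. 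Writing $\yC Q t(X) = G(X)^{s-t} b_t(X)$ and noting that $P\T t = G(X)^{s-t}(Y-R(X))^t$ has $Y$-degree exactly $t$ with leading $Y$-coefficient $G(X)^{s-t}$, we subtract $b_t(X)\, P\T t$ to zero out the $Y^t$-coefficient.
\end{itemize}
After the step $t=0$, the remaining polynomial has $Y$-degree strictly less than $0$, hence is zero, and the accumulated multipliers exhibit $Q$ as an $\Fqx$-linear combination of the $P\T t$.

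The only genuinely delicate point is the second case: one must ensure that at the moment we apply Lemma~\ref{lem_QdivG}, the polynomial at hand still belongs to $\Mod[\sell]{M}$ and has $Y$-degree $<s$ (so that the lemma's hypothesis $t < s$ covers every remaining coefficient). Both are guaranteed by the reduction order (we first clear all $Y^t$ with $t \ge s$) and by the module structure, so no further work is needed beyond bookkeeping.
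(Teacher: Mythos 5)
Your proof is correct and follows essentially the same route as the paper's: both establish containment of each $P\T t$ in $\Mod[\sell]{M}$ by exploiting that $G(X)$ and $Y-R(X)$ each have multiplicity one at every interpolation point, and both prove spanning by a top-down reduction on the $Y$-degree, clearing coefficients with $P\T \ell,\ldots,P\T s$ (monic in $Y$) first and then invoking Lemma~\ref{lem_QdivG} to divide off $G(X)^{s-t}$ when reducing by $P\T t$ for $t<s$. Your write-up simply makes the multiplicity computation and the $t<s$ divisibility step slightly more explicit than the paper does.
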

\begin{proof}
  It is easy to see that each $P\T t(X,Y) \in \Mod[\sell]{M}$ since both $G(X)$ and $(Y-R(X))$ go through the $n$ points $(\alpha_i, r_i^\prime)$ with multiplicity one, and that $G(X)$ and $(Y-R(X))$ divide $P\T t (X,Y)$ with total power $s$ for each $t$.

  To see that any element of $\Mod[\sell]{M}$ can be written as an $\Fqx$-combination of the $P\T t(X,Y)$, let $Q(X,Y)$ be some element of $\Mod[\sell]{M}$.
  Then the polynomial $Q\T{\ell-1}(X,Y) = Q(X,Y) - \yC Q \ell P\T \ell(X,Y) $ has $Y$-degree at most $\ell-1$. Since both $Q(X,Y)$ and $P\T \ell(X,Y)$ are in $\Mod[\sell]{M}$, so must $Q\T{\ell-1}(X,Y)$ be in $\Mod[\sell]{M}$.
  Since $P\T t(X,Y)$ has $Y$-degree $t$ and $\yC{P\T t} t(X) = 1$ for $t = \ell, \ell-1, \ldots, s$, we can continue reducing this way until we reach a $Q\T{s-1}(X,Y) \in \Mod[\sell]{M}$ with $Y$-degree at most $s-1$.
  From then on, we have $\yC{P\T t} t(X) = G(X)^{s-t}$, but by Lemma~\ref{lem_QdivG}, we must also have $G(X) \mid \yC{Q\T{s-1}}{s-1}(X)$, so we can also reduce by $P\T {s-1}(X,Y)$.
  This can be continued with the remaining $P\T t(X,Y)$, eventually reducing the remainder to 0.
  \qed
\end{proof}
\begin{inJournal}
We can represent the basis of $\Mod[\sell]{M}$ by the $(\ell+1) \times (\ell+1)$ matrix $\M[\sell]{A}=\| P^{(i)}_{[j]}(X,Y)\|_{i=0,j=0}^{\ell, \ell}$ over $\Fqx$, more explicitly we have:
\begin{equation} \label{eq_BasisOfModul}
\M[\sell]{A} \defeq \left[\begin{array}{cccccc}
        G^s & 0 & 0 & \dots & 0 & 0 \\
        G^{s-1}(-R)& G^{s-1} & 0 & \dots & 0& 0 \\
        & & \vdots & & \\
        (-R)^s & \binom{s}{1}(-R)^{s-1} & \binom{s}{2}(-R)^{s-2} & \dots & 0 & 0 \\
        0 & (-R)^{s} & \binom{s}{1}(-R)^{s-1} & \dots & 0 &0 \\
        & & \vdots & & \\
        0 & \dots & (-R)^{s} & \ldots & \binom{s}{s-1}(-R) & 1 
    \end{array}\right].
\end{equation}
\end{inJournal}
\begin{inConference}
We can represent the basis of $\Mod[\sell]{M}$ by the $(\ell+1) \times (\ell+1)$ matrix $\M[\sell]{A}=\| P^{(i)}_{[j]}(X,Y)\|_{i=0,j=0}^{\ell, \ell}$ over $\Fqx$.
\end{inConference}
Any $\Fqx$-linear combination of rows of $\M[\sell] A$ thus corresponds to an element in $\Mod[\sell] M$ by its $t$th term being the $\Fqx$-coefficient to $Y^t$.
All other bases of $\Mod[\sell]{M}$ can be similarly represented by matrices, and these will be unimodular equivalent to $\M[\sell]{A}$, i.e., they can be obtained by multiplying $\M[\sell] A$ on the left with an invertible matrix over $\Fqx$.

Extending the work of Lee and O'Sullivan~\cite{leeOSullivan08}, Beelen and Brander~\cite{beelenBrander} gave a fast algorithm for computing a satisfactory $Q(X,Y)$: start with $\M[\sell]{A}$
as a basis of $\Mod[\sell] M$ and compute a different, ``minimal'' basis of $\Mod[\sell]{M}$ where an element of minimal $(1, k-1)$-weighted degree appears directly.\footnote{%
Actually, in both \cite{leeOSullivan08,beelenBrander}, a slight variant of $\M[\sell] A$ is used, but the difference is non-essential.}

In the following section,  we give further details on how to compute such a basis, but our ultimate aims in Section \ref{sec_iterative} are different: we will use a minimal basis of $\Mod[\sell] M$ to efficiently compute one for $\Mod[\hat s,\hat \ell] M$ for $\hat s \geq s$ and $\hat \ell > \ell$.
This will allow an iterative refinement for increasing $s$ and $\ell$, where after each step we have such a minimal basis for $\Mod[\sell] M$. We then exploit this added flexibility in our multi-trial algorithm.

\section{Module Minimisation} \label{sec_modulemini}

Given a basis of $\Mod[\sell] M$, e.g.~$\M[\sell] A$, the module minimisation here refers to the process of obtaining a new basis, which is the smallest among all bases of $\Mod[\sell] M$ in a precise sense.
We will define this and connect various known properties of such matrices, and use this to more precisely bound the asymptotic complexity with which they can be computed by Alekhnovich's algorithm.
\begin{defi}[Weak Popov Form \protect{\cite{mulders03}}]
  \label{def_weakpopov}
  A matrix $\M V$ over $\Fqx$ is in \emph{weak Popov} form if an only if the leading position of each row is different.
\end{defi}
We are essentially interested in short vectors in a module, and the following lemma shows that the simple concept of weak Popov form will provide this.
It is a paraphrasing of \cite[Proposition 2.3]{alekhnovich05} and we omit the proof.
\begin{lem}[Minimal Degree]
  \label{lem_minrow}
  If a square matrix $\M V$ over $\Fqx$ is in weak Popov form, then one of its rows has minimal degree of all vectors in the row space of $\M V$.
\end{lem}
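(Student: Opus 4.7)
The plan is a proof by contradiction: suppose the rows of $\M V$ are $\vec v_0, \ldots, \vec v_{m-1}$ with distinct leading positions, let $\vec v_k$ be a row of minimal degree, and suppose that some element $\vec u = \sum_{i} a_i(X) \vec v_i$ of the row space satisfies $\deg \vec u < \deg \vec v_k$. I would introduce the auxiliary quantity
\[
  \delta \defeq \max\bigl\{\deg a_i + \deg \vec v_i \,:\, a_i \neq 0\bigr\},
\]
which is well-defined because $\vec u \neq 0$, and note that $\delta \geq \deg \vec v_k$ by construction. To produce a contradiction, it suffices to exhibit a column $p$ such that the coefficient of $X^{\delta}$ in $u_p$ is nonzero, since this would force $\deg \vec u \geq \delta \geq \deg \vec v_k$.

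The key step is the choice of pivot row. Among all indices $i$ achieving $\deg a_i + \deg \vec v_i = \delta$, I would pick the one, say $j$, whose leading position $p \defeq \LP{\vec v_j}$ is largest. I then examine column $p$ contribution-by-contribution. Row $j$ itself contributes $a_j(X)\, v_{j,p}(X)$, whose $X^\delta$-coefficient is the product of the leading coefficients of $a_j$ and of $\LT{\vec v_j}$, and is therefore nonzero. Any row $i$ with $\deg a_i + \deg \vec v_i < \delta$ contributes a polynomial of degree strictly less than $\delta$ at column $p$. The remaining candidates for cancellation are rows $i \neq j$ with $\deg a_i + \deg \vec v_i = \delta$; by the maximality of $p$ we have $\LP{\vec v_i} < p$, and since $\LP{\vec v_i}$ is by definition the rightmost column at which $\vec v_i$ attains its row degree, this gives $\deg v_{i,p} < \deg \vec v_i$, so the contribution of row $i$ at column $p$ again has degree strictly less than $\delta$. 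Summing, the $X^\delta$-coefficient of $u_p$ equals the nonzero contribution coming from row $j$, which is the desired contradiction.

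The main obstacle is exactly this no-cancellation argument, and the subtlety lies in using the precise definition of $\LP{\cdot}$ as the \emph{rightmost} position attaining the row degree: without the rightmost convention, two distinct rows of equal degree attaining the maximum at several positions could still line up their high-degree terms at the same column, and the simple pigeonhole on leading positions would not suffice. Breaking ties by picking the $j$ whose leading position is maximal is precisely what turns the weak Popov hypothesis into the strict inequality $\deg v_{i,p} < \deg \vec v_i$ for all competing rows. Once that is clear, the rest is a short degree bookkeeping, and the statement of the lemma follows because the minimal-degree row $\vec v_k$ is itself in the row space, so its degree equals the minimum degree achieved there.
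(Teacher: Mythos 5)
Your proof is correct. The paper itself omits the proof (it merely cites \cite[Proposition~2.3]{alekhnovich05}), but your argument --- choose among the indices attaining $\delta = \max_i(\deg a_i + \deg\vec v_i)$ the one whose leading position $p$ is rightmost, and then observe that in column $p$ no other contributing row can match the $X^\delta$-term because the weak Popov condition forces their leading positions to lie strictly to the left of $p$, hence $\deg v_{i,p} < \deg\vec v_i$ --- is precisely the standard proof of this fact, and every step checks out, including the implicit restriction to $\vec u \neq 0$ and the use of the rightmost-maximum convention in $\LP{\cdot}$ to make the no-cancellation claim airtight.
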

Denote now by $\Mod[\ell]{\M W}$ the diagonal $(\ell+1) \times (\ell+1)$ matrix over $\Fqx$:
\begin{equation} \label{eq_DiagWeight}
  \Mod[\ell]{\M W} \defeq \diag \left(1,X^{k-1}, \dots, X^{\ell(k-1)} \right).
\end{equation}
Since we seek an element of minimal $(1,k-1)$-weighted degree, we also need the following corollary.
\begin{cor}[Minimal Weighted Degree] \label{cor_WeightSol}
  Let $\M B \in \Fqx^{(\ell+1) \times (\ell+1)}$ be the matrix representation of a basis of $\Mod[s,\ell] M$.
  If $\M B  \Mod[\ell]{\M W}$ is in weak Popov form, then one of the rows of $\M B$ corresponds to a polynomial in $\Mod[s,\ell] M$ with minimal $(1,k-1)$-weighted degree.
\end{cor}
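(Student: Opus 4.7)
The plan is to reduce the problem to Lemma~\ref{lem_minrow} by observing that right-multiplication by $\Mod[\ell]{\M W}$ is precisely the operation that converts $(1,k-1)$-weighted degree on $\Mod[s,\ell] M$ into ordinary vector degree on $\Fqx^{\ell+1}$.

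First I would spell out the correspondence. A row $\vec v = (v_0(X), \dots, v_\ell(X))$ of any basis matrix represents the polynomial $Q(X,Y) = \sum_{t=0}^{\ell} v_t(X) Y^t \in \Mod[s,\ell] M$, and by the definition of $(1,k-1)$-weighted degree,
\begin{equation*}
  \wdeg{1}{k-1} Q(X,Y) = \max_t \{\deg v_t(X) + t(k-1)\}.
\end{equation*}
On the other hand, the corresponding row of $\M B \Mod[\ell]{\M W}$ is $\vec v \cdot \Mod[\ell]{\M W} = (v_0(X), X^{k-1}v_1(X), \dots, X^{\ell(k-1)} v_\ell(X))$, whose ordinary (vector) degree in the sense of Section~\ref{ssec_notation} equals exactly that same maximum. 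Thus the $(1,k-1)$-weighted degree of $Q$ equals the ordinary degree of the associated row of $\M B \Mod[\ell]{\M W}$.

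Next I would lift this to the whole module. Since $\M B$ is a basis for $\Mod[s,\ell] M$, every $Q \in \Mod[s,\ell] M$ is of the form $\vec u \M B$ for some $\vec u \in \Fqx^{\ell+1}$, and hence corresponds to the element $\vec u \M B \Mod[\ell]{\M W}$ of the row space of $\M B \Mod[\ell]{\M W}$. The degree identity above applies to any such $\vec u \M B$, so minimising $\wdeg{1}{k-1} Q$ over $Q \in \Mod[s,\ell] M$ is the same problem as minimising the ordinary degree over the row space of $\M B \Mod[\ell]{\M W}$.

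Finally, since $\M B \Mod[\ell]{\M W}$ is a square matrix in weak Popov form, Lemma~\ref{lem_minrow} guarantees that some row of $\M B \Mod[\ell]{\M W}$ attains the minimal degree over its entire row space. That row is obtained from the corresponding row $\vec b_i$ of $\M B$ by right-multiplication with $\Mod[\ell]{\M W}$, so by the identity above, the polynomial associated with $\vec b_i$ achieves the minimal $(1,k-1)$-weighted degree in $\Mod[s,\ell] M$, which is exactly the claim. There is no real obstacle here: the only thing to check carefully is that $\Mod[\ell]{\M W}$ is invertible over the field of fractions (so that it genuinely preserves the bijection between $\Fqx$-combinations and row-space elements), which is immediate since its diagonal entries are nonzero.
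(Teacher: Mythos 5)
Your argument is correct and follows essentially the same route as the paper's own proof: both exploit that right-multiplication by $\Mod[\ell]{\M W}$ turns $(1,k-1)$-weighted degree into ordinary row degree (the paper phrases this as the isomorphism $Q(X,Y)\mapsto Q(X,X^{k-1}Y)$), and then both invoke Lemma~\ref{lem_minrow} on the weak-Popov-form matrix $\M B\Mod[\ell]{\M W}$. You merely spell out the degree bookkeeping a bit more explicitly than the paper does.
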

\begin{proof}
  Let $\M{\tilde B} = \M B  \Mod[\ell]{\M W}$.
  Now, $\M{\tilde B}$ will correspond to the basis of an $\Fqx$-module $\tilde M$ isomorphic to $\Mod[s,\ell] M$, where an element $Q(X,Y) \in \Mod[s,\ell] M$ is mapped to $Q(X,X^{k-1}Y) \in \tilde M$.
  By Lemma \ref{lem_minrow}, the row of minimal degree in $\M{\tilde B}$ will correspond to an element of $\tilde M$ with minimal $X$-degree.
  Therefore, the same row of $\M B$ corresponds to an element of $\Mod[s,\ell] M$ with minimal $(1,k-1)$-weighted degree.
  \qed
\end{proof}

%
We introduce what will turn out to be a measure of how far a matrix is from being in weak Popov form.

\begin{defi}[Orthogonality Defect \protect{\cite{lenstra85}}]
  \label{def_orthogonality}
  Let the \emph{orthogonality defect} of a square matrix $\M V$ over $\Fqx$ be defined as $\OD{\M V} \defeq \deg \M V - \deg \det \M V$.
\end{defi}

\begin{lem}\label{lem_popovreduces}
  If a square matrix $\M V$ over $\Fqx$ is in weak Popov form then $\OD{\M V} = 0$.
\end{lem}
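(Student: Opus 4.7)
The plan is to expand $\det \M V$ via the Leibniz formula and show that exactly one permutation contributes a term of maximal degree $\deg \M V$, while every other permutation contributes a term of strictly smaller degree; no cancellation can then bring $\deg \det \M V$ below $\deg \M V$, and the reverse inequality is automatic, so $\OD{\M V} = 0$.

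More concretely, let $\M V$ be $m \times m$ with rows $\vec v_0, \ldots, \vec v_{m-1}$, and define $\pi(i) \defeq \LP{\vec v_i}$. The weak Popov assumption says the $\pi(i)$ are all distinct, so $\pi$ is a permutation of $\{0,\ldots,m-1\}$. By definition of the leading position, $\deg v_{i,\pi(i)} = \deg \vec v_i$, so the term $\prod_i v_{i,\pi(i)}$ in the Leibniz expansion has degree exactly $\sum_i \deg \vec v_i = \deg \M V$.

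The main (though small) obstacle is ruling out any other permutation matching this degree. For $\sigma \neq \pi$, suppose for contradiction that $\deg v_{i,\sigma(i)} = \deg \vec v_i$ for every $i$. Since $\LP{\vec v_i}$ is defined as the \emph{largest} index achieving the row degree, this forces $\sigma(i) \leq \pi(i)$ for all $i$. But $\pi$ and $\sigma$ are both permutations of $\{0,\ldots,m-1\}$, so $\sum_i \sigma(i) = \sum_i \pi(i) = \binom{m}{2}$, and the pointwise inequality must be an equality everywhere, i.e.\ $\sigma = \pi$, a contradiction. Hence $\prod_i v_{i,\sigma(i)}$ has degree strictly less than $\deg \M V$ for every $\sigma \neq \pi$.

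Therefore the leading term of $\det \M V$ comes solely from $\sigma = \pi$ and cannot be cancelled, yielding $\deg \det \M V = \deg \M V$. Combined with the general bound $\deg \det \M V \le \deg \M V$ (itself an immediate consequence of the same Leibniz expansion), we conclude $\OD{\M V} = \deg \M V - \deg \det \M V = 0$.
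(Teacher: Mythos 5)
Your proof is correct and follows essentially the same route as the paper: both expand $\det\M V$ via Leibniz, identify the permutation $\pi(i)=\LP{\vec v_i}$ as the unique contributor of a term of degree $\deg\M V$, and use the fact that $\sum_i\sigma(i)$ is constant over permutations to rule out any other $\sigma$ matching that degree. The only cosmetic difference is that you argue by contradiction (assume every entry attains its row degree, deduce $\sigma\leq\pi$ pointwise, force $\sigma=\pi$) whereas the paper directly exhibits an index $i$ with $\sigma(i)>\LP{\vec v_i}$; these are the same observation phrased contrapositively.
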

\begin{proof}
  Let $\vec{v}_0,\ldots,\vec{v}_{m-1}$ be the rows of $\M{V} \in \Fqx^{m \times m}$ and $v_{i,0},\ldots, v_{i,m-1}$ the elements of $\vec v_i$.
  In the alternating sum-expression for $\det \M V$, the term $\prod_{i=0}^{m-1} \LT{\vec{v}_i}$ will occur since the leading positions of $\vec{v}_i$ are all different.
  Thus $\deg \det \M V = \sum_{i=0}^{m-1} \deg \LT{\vec{v}_i} = \deg \M V$ unless leading term cancellation occurs in the determinant expression.
  However, no other term in the determinant has this degree: regard some (unsigned) term in $\det \M V$, say $t = \prod_{i=0}^{m-1} v_{i, {\sigma(i)}}$ for some permutation $\sigma \in S_m$.
  If not $\sigma(i) = \LP{\vec{v}_i}$ for all $i$, then there must be an $i$ such that $\sigma(i) > \LP{\vec{v}_i}$ since $\sum_j \sigma(j)$ is the same for all $\sigma \in S_m$.
  Thus, $\deg v_{i, {\sigma(i)}} < \deg v_{i, \LP{\vec{v}_i}}$.
  As none of the other terms in $t$ can have greater degree than their corresponding row's leading term, we get $\deg t < \sum_{i=0}^{m-1} \deg \LT{\vec{v}_i}$.
  Thus, $\OD{\M V} = 0$.
  However, the above also proves that the orthogonality defect is at least 0 for \emph{any} matrix.
  Since any matrix unimodular equivalent to $\M V$ has the same determinant, $\M V$ must therefore have minimal row-degree among these matrices.
  \qed
\end{proof}
\begin{inJournal}
Let us consider an example to illustrate all the above.
\begin{exa}[Orthogonality Defect and Weak-Popov Form] \label{ex_weakpopov}
Let us consider the following matrices $\M{V}_0, \dots, \M{V}_3 \in \Fx{2}^{3 \times 3}$. From matrix $\M{V}_i$ to $\M{V}_{i+1}$ we performed one row-operation to reduce the leading term:
\begin{scriptsize}
\begin{equation*}
\M{V}_0 =  \left[\begin{array}{ccc}
        1 & X^2 & X \\
        0 & X^3 & X^2 \\
        X & 1 & 0 
    \end{array}\right]
\xrightarrow{(1,2)}
\M{V}_1 =  \left[\begin{array}{ccc}
        1 & X^2 & X \\
        X & 0 & 0 \\
        X & 1 & 0 
    \end{array}\right]
\xrightarrow{(3,2)}
\M{V}_2 =  \left[\begin{array}{ccc}
        1 & X^2 & X \\
        X & 0 & 0 \\
        0 & 1 & 0 
    \end{array}\right]
\xrightarrow{(1,3)}
\M{V}_3 =  \left[\begin{array}{ccc}
        1 & 0 & X \\
        X & 0 & 0 \\
        0 & 1 & 0 
    \end{array}\right],
\end{equation*}
\end{scriptsize}where the indexes $(i_1,i_2)$ on the arrow indicated the concerned rows.
The orthogonality defect is decreasing; $\OD{\M{V}_0} = 3 \rightarrow \OD{\M{V}_1} = 2 \rightarrow \OD{\M{V}_2} = 1 \rightarrow \OD{\M{V}_3} = 0$.
\end{exa}
\end{inJournal}

Alekhnovich~\cite{alekhnovich05} gave a fast algorithm for transforming a matrix over $\Fqx$ to weak Popov form.
For the special case of square matrices, a finer description of its asymptotic complexity can be reached in terms of the orthogonality defect, and this is essential for our decoder.
\begin{lem}[Alekhnovich's Row-Reducing Algorithm] \label{lem_alekcompl}
Alekhnovich's algorithm inputs a matrix $\M V \in \Fqx^{m\times m}$ and outputs a unimodular equivalent matrix which is in weak Popov form.
  Let $N$ be the greatest degree of a term in $\M V$.
  If $N \in O(\OD{\M V})$ then the algorithm has asymptotic complexity:
  \begin{equation*}
  O\big( m^3 \OD{\M V}\log^2 \OD{\M V} \log\log \OD{\M V} \big) \quad \text{operations over $\Fq$}.
  \end{equation*}
\end{lem}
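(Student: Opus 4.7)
The plan is to refine Alekhnovich's complexity analysis for the square-matrix setting. His paper states the complexity of the row-reducing algorithm in terms of the matrix size $m$ and the maximum entry degree $N$, and the task is to show that in this expression $N$ can be replaced by $\OD{\M V}$ whenever $N \in O(\OD{\M V})$.

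The algorithm uses divide-and-conquer, effectively performing a logarithmic number of ``rounds'', each consisting of a constant number of matrix multiplications over $\Fqx$. The recursion terminates once weak Popov form is reached, which by Lemma~\ref{lem_popovreduces} coincides with $\OD{\cdot} = 0$. A first step is therefore to argue that the number of rounds is $O(\log \OD{\M V})$, controlled by how much the orthogonality defect has to drop to reach zero. Next, I would bound the polynomial degrees appearing in intermediate matrices: since the algorithm transforms $\M V$ only by unimodular operations, the determinantal degree is preserved, so the orthogonality defect of every intermediate matrix remains bounded by $\OD{\M V}$. Combined with the hypothesis $N \in O(\OD{\M V})$ on the initial entries, a degree-tracking argument through the recursion then ensures that each entry of every intermediate matrix has degree in $O(\OD{\M V})$.

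Applying fast polynomial arithmetic, for which one multiplication of polynomials of degree $d$ costs $O(d \log d \log\log d)$ field operations, the per-round cost becomes $O(m^3 \OD{\M V} \log \OD{\M V} \log\log \OD{\M V})$. Multiplying by the $O(\log \OD{\M V})$ rounds gives the claimed bound. The main obstacle is the degree-tracking through the recursive calls: without the hypothesis $N \in O(\OD{\M V})$, the input contribution $N$ could dominate the intermediate degrees and the analysis would need to carry an extra additive $N$ term in the cost expression. With the hypothesis in place, $N$ is absorbed into $\OD{\M V}$ up to a constant factor, and the stated bound follows.
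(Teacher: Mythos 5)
Your overall strategy matches the paper's: normalize $t = \OD{\M V}$ as the controlling parameter via Lemma~\ref{lem_popovreduces}, and then substitute it for $N$ in Alekhnovich's complexity bound. However, the paper does not re-derive Alekhnovich's complexity analysis; it simply invokes his method $R(\M V, t)$, which produces a unimodular $\M U$ such that $\deg(\M U\M V)\leq\deg\M V - t$ or $\M U\M V$ is in weak Popov form, cites \cite[Lemma 2.10]{alekhnovich05} for the cost $O(m^3 t\log^2 t\log\log t)$ of $R(\M V, t)$, and observes by Lemma~\ref{lem_popovreduces} that $t=\OD{\M V}$ forces the weak-Popov branch. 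Crucially, the paper then handles the remaining cost of actually forming the product $\M U\M V$ by citing \cite[Lemma 2.8]{alekhnovich05}: every entry of $\M U$ is representable as $p(X)X^d$ with $\deg p\leq 2t$, so the recursion's cost already depends only on $t$, not on $N$. The hypothesis $N\in O(\OD{\M V})$ is needed \emph{only} for the final multiplication $\M U\M V$.

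This is where your account diverges from the actual mechanism. You place the burden of the $N$-hypothesis on ``degree-tracking through the recursive calls,'' suggesting $N$ could otherwise dominate \emph{intermediate} matrices; but Alekhnovich's compact representation is precisely what prevents the input degree from leaking into the recursion, so the internal cost is $N$-independent. Your description of the recursion as ``a logarithmic number of rounds, each a constant number of matrix multiplications'' is also structurally loose: the divide-and-conquer tree has $O(\log t)$ levels with up to $2^d$ nodes at depth $d$; the per-level \emph{total} work happens to be $O(m^3 t\log t\log\log t)$, which yields the same asymptotic, but that accounting should be made explicit rather than asserted as ``constant per round.'' Finally, your claim that every intermediate matrix has orthogonality defect at most $\OD{\M V}$ needs the extra observation that the row degree is non-increasing under the algorithm's operations; it does not follow from unimodularity alone. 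In short, the core idea is right and you reach the correct bound, but a rigorous version of your reconstruction would have to redo a fair portion of Alekhnovich's internal bookkeeping, whereas the intended proof is short precisely because it leans on his Lemmas 2.8 and 2.10 as black boxes.
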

\begin{proof}
  The description of the algorithm as well as proof of its correctness can be found in~\cite{alekhnovich05}.
  We only prove the claim on the complexity.
  The method $R(\M V, t)$ of \cite{alekhnovich05} computes a unimodular matrix $\M U$ such that $\deg(\M U\M V) \leq \deg \M V - t$ or $\M U\M V$ is in weak Popov form.
  According to \cite[Lemma 2.10]{alekhnovich05}, the asymptotic complexity of this computation is  in $O(m^3 t \log^2 t \log\log t)$.
  Due to Lemma~\ref{lem_popovreduces}, we can set $t = \OD{\M V}$ to be sure that $\M U\M V$ is in weak Popov form.
  What remains is just to compute the product $\M U\M V$.
  Due to \cite[Lemma 2.8]{alekhnovich05}, each entry in $\M U$ can be represented as $p(X)X^d$ for some $d \in \NN_0$ and $p(X) \in \Fqx$ of degree at most $2t$.
  If therefore $N \in O(\OD{\M V})$, the complexity of performing the matrix multiplication using the naive algorithm is $O(m^3\OD{\M V})$.
  \qed
\end{proof}

\section{Multi-Trial List Decoding}
\label{sec_iterative}

\subsection{Basic Idea}
\label{ssec_outline}

Using the results of the preceding section, we show in Section~\ref{ssec_steptypeone} that given a basis of $\Mod[\sell] M$ as a matrix $\M[\sell] B$ in weak Popov form, then we can write down a matrix $\M[s,\ell+1] C\stepI$ which is a basis of $\Mod[s,\ell+1] M$ and whose orthogonality defect is much lower than that of $\M[s,\ell+1] A$.
This means that reducing $\M[s,\ell+1] C\stepI$ to weak Popov form using Alekhnovich's algorithm is faster than reducing $\M[s,\ell+1] A$.
We call this kind of refinement a ``micro-step of type I''.
In Section~\ref{ssec_steptypetwo}, we similarly give a way to refine a basis of $\Mod[\sell] M$ to one of $\Mod[s+1,\ell+1] M$, and we call this a micro-step of type II.

If we first compute a basis in weak Popov form of $\Mod[1,1] M$ using $\M[1,1] A$, we can perform a sequence of micro-steps of type I and II to compute a basis in weak Popov form of $\Mod[s,\ell] M$ for any $s,\ell$ with $\ell \geq s$.
After any step, having some intermediate $\is \leq s$, $\iell \leq \ell$, we will thus have a basis of $\Mod[\isell] M$ in weak Popov form.
By Corollary~\ref{cor_WeightSol}, we could extract from $\M[\isell] B$ a $\iQ(X,Y) \in \Mod[\isell] M$ with minimal $(1,k-1)$-weighted degree.
Since it must satisfy the interpolation conditions of Theorem~\ref{thm_GSproblem}, and since the weighted degree is minimal among such polynomials, it must also satisfy the degree constraints for $\itau = \tau(\isell)$.
By that theorem any codeword with distance at most $\itau$ from $\vec r$ would then be represented by a root of $\iQ(X,Y)$.

Algorithm~\ref{alg_multitrial} is a generalisation and formalisation of this method.
For a given $\RS n k$ code, one chooses ultimate parameters $(s, \ell, \tau)$ being a permissible triple with $s \leq \ell$.
One also chooses a list of micro-steps and chooses after which micro-steps to attempt decoding; these choices are represented by a list of $\var S_1, \var S_2$ and $\var{Root}$ elements.
This list must contain exactly $s-\ell$ $\var S_1$-elements of  and $s-1$ $\var S_2$-elements, as it begins by computing a basis for $\Mod[1,1] M$ and will end with a basis for $\Mod[\sell] M$.
If there is a $\var{Root}$ element in the list, the algorithm finds all codewords with distance at most $\itau = \tau(\isell)$ from $\vec r$; if this list is non-empty, the computation breaks and the list is returned.

The algorithm calls sub-functions which we explain informally: $\var{MicroStep1}$ and $\var{MicroStep2}$ will take $\isell$ and a basis in weak Popov form for $\Mod[\isell] M$ and return a basis in weak Popov form for $\Mod[\is,\iell+1] M$ respectively $\Mod[\is+1,\iell+1] M$; more detailed descriptions for these are given in Subsections~\ref{ssec_steptypeone} and \ref{ssec_steptypetwo}.
$\var{MinimalWeightedRow}$ finds a polynomial of minimal $(1,k-1)$-weighted degree in $\M[\isell]{M}$ given a basis in weak Popov form (Corollary~\ref{cor_WeightSol}).
Finally, $\var{RootFinding}(Q, \tau)$ returns all $Y$-roots of $Q(X,Y)$ of degree less than $k$ and whose corresponding codeword has distance at most $\tau$ from the received word $\vec r$.

\printalgos{\caption{Multi-Trial Guruswami--Sudan Decoding}
\label{alg_multitrial} 
\DontPrintSemicolon 
\SetAlgoVlined
\LinesNumbered
\SetKwInput{KwPre}{{Preprocessing}}
\SetKwInput{KwIn}{{Input}}
\SetKwInput{KwOut}{{Output}}
\BlankLine
\KwIn{A $\RS{n}{k}$ code and the received vector $\vec r = (r_0, \dots, r_{n-1})$\;
A permissible triple $(s,\ell,\tau)$\;
A list \var{C} with elements in $\{\var S_1, \var S_2, \var{Root} \}$ with $s-1$  instances of $\var S_2$, $\ell-s$ instances of $\var S_1$\;
}
\BlankLine
\KwPre{Calculate $r_i^{\prime} = r_i/w_i$ for all $i=0,\dots,n-1$\;
Construct $\M[1,1]{A}$, and compute $\M[1,1]{B}$ from $\M[1,1]{A} \M[1]{W}$ using Alekhnovich's algorithm\;
Initial parameters $(\is, \iell) \leftarrow (1, 1)$ \;}
\BlankLine
\For{each $\var c$ in $\var{C}$}{
	\If{$c = \var S_1$ \label{alg_line_firstif}}{
		$\M[\is,\iell+1]{B} \leftarrow \var{MicroStep1}(\is, \iell, \M[\isell]{B})$\;
		$(\isell) \leftarrow (\is,\iell+1)$\;
	}
	\If{$c = \var S_2$ \label{alg_line_secondif}} {
		$\M[\is+1,\iell+1]{B} \leftarrow \var{MicroStep2}(\is, \iell, \M[\isell]{B})$\;
		$(\isell) \leftarrow (\is+1,\iell+1)$\;
	}
	\If{$c = \var{Root}$ \label{alg_line_thirdif}} {
          $Q(X,Y) \leftarrow \var{MinimalWeightedRow}(\M[\isell]{B})$ \label{alg_line_extractpoly}\;
          \If{$\var{RootFinding}(Q(X,Y), \tau(\isell)) \neq \emptyset$}{
            \Return this list\;
          }
        }
}
}

Algorithm~\ref{alg_multitrial} has a large amount of flexibility in the choice of the list $\var C$, but since we can only perform micro-steps of type I and II, there are choices of $s$ and $\ell$ we can never reach, or some which we cannot reach if we first wish to reach an earlier $s$ and $\ell$.
We can never reach $s > \ell$, but as mentioned in Section~\ref{sec_prelim}, such a choice never makes sense. 
It also seems to be the case that succession of sensibly chosen parameters can always be reached by micro-steps of type I and II.
That is, if we first wish to attempt decoding at some radius $\tau_1$ and thereafter continue to $\tau_2 > \tau_1$ in case of failure, the minimal possible $s_1, \ell_1$ and $s_2, \ell_2$ such that $(s_1,\ell_1,\tau_1)$ respectively $(s_2,\ell_2,\tau_2)$ are permissible will satisfy $0 \leq s_2-s_1 \leq \ell_2-\ell_1$.
However, we have yet to formalise and prove such a statement.

\begin{inJournal}
Ultimately we are interested in attempting to decode at various radii, and so the following corollary describes that we are in fact not limited in this respect:
\begin{cor}
  Let $(s_0,\ell_0,\tau_0), \ldots, (s_h,\ell_h,\tau_h)$ be a sequence of permissible triples such that $\tau_i < \tau_{i+1}$ and $(s_0,\ell_0) = (1, 1)$.
  Then there exists $s'_0,\ldots,s'_h$ and $\ell'_0,\ldots,\ell'_h$ such that $(s'_i,\ell'_i,\tau_i)$ is permissible for all $i$ and where $0 \leq s'_{i+1} - s'_i \leq \ell_{i+1} - \ell_i$. Furthermore, $s'_i \leq s_i$ and $\ell'_i \leq \ell_i$.
\end{cor}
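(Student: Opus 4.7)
I would prove this by induction on $i$. The base case is immediate: set $(s'_0,\ell'_0) \defeq (s_0,\ell_0) = (1,1)$, which is permissible by hypothesis and for which the required inequalities are vacuous. For the inductive step, suppose $(s'_i,\ell'_i)$ has been chosen with $s'_i \le s_i$, $\ell'_i \le \ell_i$ and $(s'_i,\ell'_i,\tau_i)$ permissible. I would split on the sign of the defect $\Delta \defeq (s_{i+1}-s'_i)-(\ell_{i+1}-\ell'_i)$.

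If $\Delta \le 0$, take $(s'_{i+1},\ell'_{i+1}) \defeq (s_{i+1},\ell_{i+1})$ from the given sequence: permissibility is by assumption, the monotonicity bounds $s'_{i+1}\le s_{i+1}$ and $\ell'_{i+1}\le \ell_{i+1}$ are tight, and the growth condition $0\le s'_{i+1}-s'_i \le \ell'_{i+1}-\ell'_i$ rearranges to $\Delta\le 0$ together with $s'_i \le s_{i+1}$. The latter is the only non-routine point, and I would dispose of it either by observing that in the typical regime the minimal $s_i$ for $\tau_i$ is non-decreasing, or, if needed, by replacing $s_{i+1}$ with $\max(s'_i,s_{i+1})$ and appealing to concavity of $E$ in $s$ to preserve permissibility. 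If $\Delta>0$, I invoke Lemma~\ref{lem_permSdiff} on $(s'_i,\ell'_i,\tau_i)$ and $(s_{i+1},\ell_{i+1},\tau_{i+1})$, whose hypotheses are precisely $\tau_{i+1}>\tau_i$ and $\Delta>0$; its conclusion gives permissibility of $(s'_i+(\ell_{i+1}-\ell'_i),\,\ell_{i+1},\,\tau_{i+1})$. Setting $s'_{i+1}\defeq s'_i+(\ell_{i+1}-\ell'_i)$ and $\ell'_{i+1}\defeq \ell_{i+1}$ then yields $s'_{i+1}-s'_i = \ell'_{i+1}-\ell'_i$, $\ell'_{i+1} = \ell_{i+1}$, and $s'_{i+1} = s_{i+1}-\Delta < s_{i+1}$, closing the induction.

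The main obstacle is that Lemma~\ref{lem_permSdiff} is itself flagged as having an unsound proof, so it must be repaired before this corollary can stand on it. The binomial identity actually needed is $\binom{a-b}{2} = \binom{a}{2} - ab + \binom{b+1}{2}$ rather than the variant written in the current proof. After substituting $s+(\iell-\ell) = \is-\Delta$ into $E$ and applying this corrected identity, one aims to express $E(s+(\iell-\ell),\iell,\itau) - E(\is,\iell,\itau)$ as $\Delta$ times a quantity that is manifestly non-negative under the running assumptions $\is \le \iell$ and $\itau < n-\sqrt{n(k-1)}$, together with $\Delta \ge 1$. Once this algebraic repair is made, the inductive argument above closes with no further subtlety beyond the bookkeeping noted in the previous paragraph.
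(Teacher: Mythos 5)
Your proposal takes a genuinely different route from the paper's, so a comparison is in order --- but both proofs have real gaps, and one of yours is new.

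The paper proves the corollary by a \emph{three-pass normalisation} of the given sequence rather than by forward induction. In the first two passes it enforces that the $s_i$ and the $\ell_i$ are each non-decreasing, by replacing an earlier triple whenever a decrease is observed; this step leans on an unnamed ``Lemma TODO'' asserting that decreasing $s$ (respectively $\ell$) alone preserves permissibility. In the third pass, once both coordinates are monotone, it repairs violations of $s_{i+1}-s_i \le \ell_{i+1}-\ell_i$ by invoking Lemma~\ref{lem_permSdiff} to shrink $s_{i+1}$. Your forward induction short-circuits passes one and two, but the difficulty they addressed resurfaces in your $\Delta\le 0$ branch as the requirement $s'_i\le s_{i+1}$: this is exactly the monotonicity fact the paper tried to manufacture by normalisation and left unproved. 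So the two proofs share their load-bearing unproved ingredients (Lemma~\ref{lem_permSdiff} plus a monotonicity-in-$s$ claim); they merely distribute them across the argument differently.

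What is specifically wrong in your proposal is the fallback you offer for the $s'_i > s_{i+1}$ subcase. You suggest replacing $s_{i+1}$ with $\max(s'_i,s_{i+1})$ and ``appealing to concavity of $E$ in $s$.'' Concavity of $E(\cdot,\ell,\tau)$ only tells you that its superlevel set $\{s : E(s,\ell,\tau)>0\}$ is an interval; it gives no protection against overshooting the right endpoint of that interval when you enlarge $s$. Since $(s_{i+1},\ell_{i+1},\tau_{i+1})$ need not sit to the left of the apex, increasing $s_{i+1}$ to $s'_i$ can drop $E$ below zero, so this repair does not go through. Your other suggested fix (``in the typical regime the minimal $s_i$ is non-decreasing'') is exactly the statement the paper would need for its own first pass and is unestablished there as well. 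On the positive side, your diagnosis of Lemma~\ref{lem_permSdiff} is accurate: the identity $\binom{a-b}{2}=\binom{a}{2}-ab+\binom{b+1}{2}$ is correct and the paper's $\binom{a-b}{2}=\binom{a+1}{2}+\binom{b}{2}-ab$ (which actually evaluates to $\binom{a-b+1}{2}$) is not, and the paper's expansion also writes $\binom{\is-\Delta}{2}$ where the $E$-formula requires $\binom{\is-\Delta+1}{2}$. But you stop at flagging the error rather than carrying the corrected algebra to a conclusion, so the lemma --- and hence the corollary, under either your route or the paper's --- remains unproved.
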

\begin{proof}
  We prove the corollary by transforming the sequence of permissible triples in three passes into one which satisfies the conditions.

  In the first pass, let $i$ be the first index such that $s_i > s_{i+1}$, if any.
  By Lemma TODO, we can replace $(s_i,\ell_i,\tau_i)$ with the permissible $(s_{i+1},\ell_i,\tau_i)$.
  Continue like this with the rest of the list.

  In the second pass, we remove all occurrences where $\ell_i > \ell_{i+1}$ in a manner similar to before.
  
  Now the list satisfies that $s_{i+1} \geq s_i$ and $\ell_{i+1} \geq \ell_i$, so for the final pass, let $i$ be the first index such that $s_{i+1}-s_i > \ell_{i+1} - \ell_i$.
  By Lemma~\ref{lem_permSdiff}, we can replace $(s_i,\ell_i,\tau_i)$ by the permissible $(s_i + (\ell_{i+1}-\ell_i), \ell_i, \tau_i)$.
  \qed
\end{proof}
Note that for a sequence of parameters resulting from the above corollary, it will always be possible to choose a series of micro-steps of type I and II.
Computationally, this sequence will also not be worse than the original since it outputs only $s'_i$ and $\ell'_i$ which not larger than the original.
\end{inJournal}

In the following two subsections we explain the details of the micro-steps.
In Section \ref{ssec_complanalysis}, we discuss the complexity of the method and how the choice of $\var C$ influence this.

\subsection{Micro-Step Type I: $(s,\ell) \mapsto (s, \ell+1)$} \label{ssec_steptypeone}

\begin{lem}
  If $B\T 0(X,Y), \ldots, B\T \ell(X,Y)$ is a basis of $\Mod[\sell]{M}$, then the following is a basis of $\Mod[\sellI] M$:
  \[
  B\T 0(X,Y),\ \ldots\ ,\ B\T \ell(X,Y), Y^{\ell-s+1}(Y-R(X))^s
  \]
\end{lem}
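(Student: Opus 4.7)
The plan is to prove containment and spanning separately, reusing structure already established in Theorem~\ref{thm_Mbasis}. For containment, each $B\T i(X,Y)$ lies in $\Mod[\sell]{M}$ by assumption, and since $\Mod[\sell]{M} \subseteq \Mod[\sellI]{M}$ (enlarging the $Y$-degree bound can only add polynomials), we immediately get $B\T i(X,Y) \in \Mod[\sellI]{M}$. For the new generator, observe that $Y^{\ell-s+1}(Y-R(X))^s$ is exactly the element $P\T{\ell+1}(X,Y)$ appearing in Theorem~\ref{thm_Mbasis} applied with parameters $(s,\ell+1)$, which we already know lies in $\Mod[\sellI]{M}$.

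For the spanning claim, I would take an arbitrary $Q(X,Y) \in \Mod[\sellI]{M}$ and reduce its $Y$-degree by a single step. Write $\yC{Q}{\ell+1}(X) \in \Fqx$ for the $Y^{\ell+1}$-coefficient of $Q$, and set
\[
  Q'(X,Y) \defeq Q(X,Y) - \yC{Q}{\ell+1}(X) \cdot Y^{\ell-s+1}(Y-R(X))^s.
\]
The key observation is that $Y^{\ell-s+1}(Y-R(X))^s$ has leading $Y$-coefficient $1$, so this subtraction cancels the $Y^{\ell+1}$-term of $Q$ and yields a polynomial $Q'(X,Y)$ of $Y$-degree at most $\ell$. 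Because $\Mod[\sellI]{M}$ is an $\Fqx$-module and both $Q$ and $P\T{\ell+1}(X,Y)$ lie in it, so does $Q'$; combined with the degree bound this gives $Q' \in \Mod[\sell]{M}$.

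Since $B\T 0, \ldots, B\T \ell$ is a basis of $\Mod[\sell]{M}$ by hypothesis, we can write $Q'(X,Y) = \sum_{i=0}^\ell a_i(X) B\T i(X,Y)$ for some $a_i(X) \in \Fqx$. Rearranging the definition of $Q'$ then expresses $Q$ as the required $\Fqx$-linear combination of the proposed generating set. No difficulty is anticipated: the only subtle point is checking the leading $Y$-coefficient of $Y^{\ell-s+1}(Y-R(X))^s$ is $1$ so that the reduction step lands cleanly in $\Mod[\sell]{M}$, and this is immediate by expanding the $(Y-R(X))^s$ factor.
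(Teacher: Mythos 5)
Your proof is correct and takes essentially the same approach as the paper: both rest on the fact that the generating set for $\Mod[\sellI]{M}$ from Theorem~\ref{thm_Mbasis} is that of $\Mod[\sell]{M}$ with $Y^{\ell-s+1}(Y-R(X))^s$ appended, so swapping the $\Mod[\sell]{M}$ portion for the hypothesised basis preserves the generating property. You merely spell out the one-step $Y$-degree reduction (using that the new generator is monic in $Y$) that the paper leaves implicit.
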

\begin{proof}
  In the basis of $\Mod[\sellI]{M}$ given in Theorem~\ref{thm_Mbasis}, the first $\ell+1$ generators are the generators of $\Mod[\sell]{M}$.
  Thus all of these can be described by any basis of $\Mod[\sellI]{M}$.
  The last remaining generator is exactly $Y^{\ell-s+1}(Y-R(X))^s$.
  \qed
\end{proof}

In particular, the above lemma holds for a basis of $\Mod[\sellI] M$ in weak Popov form, represented by a matrix $\M[\sell]{B}$.
The following matrix thus represents a basis of $\Mod[\sellI]{M}$:
\begin{equation}\label{eqn_stepI}
  \M[\sellI] C\stepI=
  \left[\begin{array}{r}
      \begin{array}{@{}c|c@{}}
      \\[-0.2cm]
      \makebox[4cm][c]{$\M[\sell]{B} $}
      & 
      \makebox[1.5em][r]{$\vec 0^T$}
      \\[-0.2cm] \
    \end{array}
    \\ \hline \\[-.3cm]
    \begin{matrix}
      0 & \ldots & 0 & (-R)^s & \binom s 1 (-R)^{s-1} & \ldots & 1
    \end{matrix}
  \end{array}\right].
\end{equation}
\begin{lem} 
  \label{lem_orthoI}
  $\OD{\M[\sellI]{C}\stepI \M[\ell+1]{W}} = s(\deg R - k + 1) \leq s(n-k)$.
\end{lem}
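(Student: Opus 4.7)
My plan is to directly compute the two quantities in the definition $\OD{\M V} = \deg \M V - \deg \det \M V$ for the specific matrix $\M V = \M[\sellI]{C}\stepI \M[\ell+1]{W}$, then subtract. The shape of $\M[\sellI]{C}\stepI$ makes the determinant essentially trivial, so the orthogonality defect is read off from the row-degrees.

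First I would compute $\deg \det \M[\sellI]{C}\stepI \M[\ell+1]{W}$. Expanding the determinant of $\M[\sellI]{C}\stepI$ along its last column, only the bottom-right entry (which equals $1$) is nonzero, so $\det \M[\sellI]{C}\stepI = \det \M[\sell]{B}$. Because $\M[\sell]{B}$ is unimodular equivalent to $\M[\sell]{A}$, and the latter is lower triangular (by Theorem~\ref{thm_Mbasis}) with diagonal $(G^s,G^{s-1},\ldots,G,1,\ldots,1)$, we have $\deg\det \M[\sell]{B} = n\binom{s+1}{2}$. Combined with $\deg\det \M[\ell+1]{W} = (k-1)\binom{\ell+2}{2}$, this yields $\deg\det \M[\sellI]{C}\stepI \M[\ell+1]{W} = n\binom{s+1}{2} + (k-1)\binom{\ell+2}{2}$.

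Next I would compute the sum of row-degrees. The first $\ell+1$ rows of $\M[\sellI]{C}\stepI \M[\ell+1]{W}$ coincide (up to an appended $0$, which does not affect their degrees) with the rows of $\M[\sell]{B}\M[\ell]{W}$; since this matrix is in weak Popov form, Lemma~\ref{lem_popovreduces} gives orthogonality defect $0$, so their degrees sum to $\deg\det(\M[\sell]{B}\M[\ell]{W}) = n\binom{s+1}{2} + (k-1)\binom{\ell+1}{2}$. For the last row, the nonzero entry at column $\ell-s+1+j$ (for $j=0,\ldots,s$) is $\binom{s}{j}(-R)^{s-j}\cdot X^{(\ell-s+1+j)(k-1)}$, of degree $(s-j)\deg R+(\ell-s+1+j)(k-1) = s\deg R+(\ell-s+1)(k-1) - j(\deg R-(k-1))$. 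Since in any nontrivial decoding instance $\deg R \geq k-1$ (otherwise $R$ itself would be the codeword), this is maximised at $j=0$, giving a last-row degree of $s\deg R+(\ell-s+1)(k-1)$.

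Subtracting the two quantities and using $\binom{\ell+2}{2} - \binom{\ell+1}{2} = \ell+1$, the $n\binom{s+1}{2}$ terms cancel and we are left with $s\deg R + (\ell-s+1)(k-1) - (\ell+1)(k-1) = s(\deg R - k + 1)$, proving the equality. The bound $s(\deg R - k + 1) \leq s(n-k)$ is then immediate from $\deg R \leq n-1$, since $R$ is an interpolation polynomial through $n$ points. The only delicate point in the argument is the assumption $\deg R \geq k-1$ needed so that the last row's degree is attained at $j=0$; I would handle this explicitly by remarking that if $\deg R<k-1$ then $R$ itself is a codeword at distance $0$ from $\vec r$, making the decoding trivial.
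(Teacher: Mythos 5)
Your proposal is correct and takes essentially the same route as the paper: compute $\deg$ and $\deg\det$ of $\M[\sellI]{C}\stepI \M[\ell+1]{W}$, subtract, and invoke Lemma~\ref{lem_popovreduces} to handle the block $\M[\sell]B\M[\ell]W$; the only cosmetic difference is that you expand $\deg\det \M[\sell]{B}$ to the explicit value $n\binom{s+1}{2}$ via the triangular $\M[\sell]A$, whereas the paper keeps it symbolic as $\deg\det(\M[\sell]B\M[\ell]W)$ and lets the terms cancel. One small imprecision: the relevant threshold is $\deg R \geq k$ (equivalently, the received word is not already a codeword), not $\deg R \geq k-1$; the case $\deg R = k-1$ already makes $R$ a valid information polynomial, so it belongs to the trivial regime you wanted to exclude. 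This does not affect the validity of the degree computation, since the last row's entries occupy distinct columns and the maximum at $j=0$ is still attained (possibly non-uniquely) when $\deg R = k-1$, but stating the sharp condition matches the paper and is cleaner.
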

\begin{proof}
  We calculate the two quantities $\det(\M[\sellI] C\stepI \M[\ell+1]{W})$ and $\deg(\M[\sellI] C\stepI \M[\ell+1]{W})$.
  It is easy to see that
  \[
  \det(\M[\sellI] C\stepI \M[\ell+1]{W})
  = \det \M[\sell]{B} \det \M[\ell+1]{W}
  = \det \M[\sell]{B} \det \M[\ell]{W} X^{(\ell+1)(k-1)}.
  \]
  For the row-degree, it is clearly $\deg(\M[\sell]B\M[\ell] W)$ plus the row-degree of the last row.
  If and only if the received word is not a codeword then $\deg R \geq k$, then the leading term of the last row must be $(-R)^sX^{(\ell+1-s)(k-1)}$.
  Thus, we get
  \begin{align*}
    \OD{\M[\sellI]C\stepI \M[\ell+1] W}
    &= \big( \deg(\M[\sell] B \M[\ell] W) + s \deg R + (\ell+1-s)(k-1) \big)
    \\ &\qquad - \big( \deg\det(\M[\sell] B \M[\ell] W) + (\ell+1)(k-1) \big)
    \\ &= s(\deg R - k + 1),
  \end{align*}
  where the last step follows from Lemma~\ref{lem_popovreduces} as $\M[\sell]B \M[\ell] W$ is in weak Popov form.
  \qed
\end{proof}
\begin{cor}
  \label{cor_ComplMSI}
  The complexity of $\var{MicroStep1}(s,\ell,\M[\sell]B)$ is $O(\ell^3 s n \log^2 n \log\log n)$.
\end{cor}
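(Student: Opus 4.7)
The plan is to realise $\var{MicroStep1}$ concretely as the two-step recipe implicit in the preceding lemma: first construct $\M[\sellI]C\stepI$ from $\M[\sell]B$ according to \eqref{eqn_stepI}, then run Alekhnovich's algorithm on $\M[\sellI]C\stepI \M[\ell+1]W$ to produce a matrix in weak Popov form, which (after conceptually dividing out the column weights) is the desired $\M[\sellI]B$. The corollary then follows by feeding the orthogonality-defect bound from Lemma~\ref{lem_orthoI} into the complexity bound of Lemma~\ref{lem_alekcompl}.

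First I would argue that forming $\M[\sellI]C\stepI$ from $\M[\sell]B$ is essentially free: it consists of appending a zero column and one new row whose entries are $\binom{s}{j}(-R)^{s-j}$, all computable from the Lagrange polynomial $R(X)$ that was constructed once during preprocessing. Right-multiplication by the diagonal $\M[\ell+1]W$ is merely a column-wise monomial shift and incurs no extra work. The entire cost of $\var{MicroStep1}$ is therefore absorbed into the Alekhnovich reduction on a matrix of size $(\ell{+}2)\times(\ell{+}2)$.

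Next I would invoke Lemma~\ref{lem_alekcompl} with $m = \ell+2 \in O(\ell)$ and, by Lemma~\ref{lem_orthoI}, $\OD{\M[\sellI]C\stepI \M[\ell+1]W} \leq s(n-k) \in O(sn)$. The lemma then yields a running time of
\[
O\bigl(m^3 \OD{\M V}\log^2 \OD{\M V}\log\log \OD{\M V}\bigr) = O\bigl(\ell^3\, sn \log^2(sn)\log\log(sn)\bigr).
\]
The standing assumption $s\le\ell$ with $s,\ell\in O(n^2)$ gives $\log(sn)\in O(\log n)$, collapsing the bound to the claimed $O(\ell^3 sn\log^2 n\log\log n)$.

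The delicate point, and the main obstacle, is verifying the side-condition $N\in O(\OD{\M V})$ required by Lemma~\ref{lem_alekcompl}, where $N$ is the maximum degree of any entry of $\M[\sellI]C\stepI \M[\ell+1]W$. For the freshly appended row this is immediate: its entries have degree at most $s\deg R + (\ell+1-s)(k-1) \le s(n-1)+(\ell+1)(k-1)$, which lies in $O(sn)$ under our standing assumption on $s$ and $\ell$. For the upper block, coming from $\M[\sell]B \M[\ell]W$, one must maintain as an inductive invariant throughout the whole iterative procedure of Section~\ref{sec_iterative} that the row degrees of $\M[\sell]B\M[\ell]W$ remain $O(sn)$. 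The base case is $\M[1,1]A\M[1]W$, whose entries are trivially bounded; each subsequent micro-step increases the orthogonality defect by at most $O(sn)$ and, because Alekhnovich's algorithm outputs a weak Popov form whose row degrees cannot exceed $\deg\det$ of the input, the invariant propagates. Once this invariant is in place the hypothesis of Lemma~\ref{lem_alekcompl} is satisfied and the complexity claim follows directly.
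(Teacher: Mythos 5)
Your proposal takes exactly the paper's route: identify $\var{MicroStep1}$ with ``form $\M[\sellI]C\stepI \M[\ell+1]W$ and run Alekhnovich's algorithm on it'', plug the orthogonality-defect bound of Lemma~\ref{lem_orthoI} into Lemma~\ref{lem_alekcompl}, and collapse the logarithmic factors using $s\in O(n^2)$. The paper's own proof is the one-line remark that the corollary ``follows by Lemma~\ref{lem_alekcompl}'' with that same $\log$-simplification, so in method you agree completely.

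Where you diverge is in flagging the hypothesis $N \in O(\OD{\M V})$ of Lemma~\ref{lem_alekcompl}, which the paper applies without comment. That is a genuine and useful observation, but the justification you give for it is flawed: you argue that an inductive invariant ``row degrees of $\M[\isell]B\M[\iell]W$ are $O(\is n)$'' propagates because ``Alekhnovich's algorithm outputs a weak Popov form whose row degrees cannot exceed $\deg\det$ of the input.'' The second clause is true (in weak Popov form the row degrees sum to $\deg\det$), but it only yields a bound of
\[
\deg\det\bigl(\M[\isell]B\M[\iell]W\bigr) \;=\; \tfrac{\is(\is+1)}{2}\,n \;+\; \tfrac{\iell(\iell+1)}{2}\,(k-1),
\]
which is $O(\is^2 n + \iell^2 k)$, not $O(\is n)$. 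So this particular argument does not establish the invariant you state. A correct line would instead observe that each row of a weak-Popov basis of $\Mod[\isell]M \M[\iell]W$ is the minimal-weighted-degree module element with its leading position, and is therefore bounded by the weighted degree of the corresponding explicit generator $P\T t$ of Theorem~\ref{thm_Mbasis}, giving $O(\is n + \iell k)$; one then still has to compare this with $\OD{\M[\sellI]C\stepI\M[\ell+1]W}=\is(\deg R -k+1)$, which can be smaller when $\iell \gg \is$. Since the paper itself leaves this hypothesis unverified, I would not call your proof wrong relative to the paper's, but the inductive-invariant sentence as written is incorrect and should be replaced or removed.
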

\begin{proof}
  Follows by Lemma~\ref{lem_alekcompl}. Since $s \in O(n^2)$ we can leave out the $s$ in $\log$-terms.
  \qed
\end{proof}

\subsection{Micro-Step Type II: $(s,\ell) \mapsto (s+1, \ell+1)$} \label{ssec_steptypetwo}

\begin{lem}
  If $B\T 0(X,Y), \ldots, B\T \ell(X,Y)$ is a basis of $\Mod[\sell]{M}$, then the
  following is a basis of $\Mod[\sellII]M$:
  \[
    G^{s+1}(X),\ B\T 0(X,Y)(Y-R(X)),\ \ldots\ ,\ B\T \ell(X,Y)(Y-R(X)).
  \]
\end{lem}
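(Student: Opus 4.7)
The plan is to reduce the lemma to Theorem~\ref{thm_Mbasis}, which already gives an explicit $\mathbb{F}_q[X]$-generating set for $\Mod[\sellII]M$. Concretely, I would first verify containment of the proposed generators in $\Mod[\sellII]M$: the polynomial $G^{s+1}(X)$ vanishes at every $\alpha_i$ with multiplicity $s+1$ and has $Y$-degree $0$, and each $B\T t(X,Y)(Y-R(X))$ is the product of an element of $\Mod[\sell]M$ (which interpolates the points with multiplicity $s$) with $Y-R(X)$ (which interpolates them with multiplicity $1$); the product therefore interpolates with multiplicity $s+1$, and its $Y$-degree is at most $\ell+1$. So each proposed generator lies in $\Mod[\sellII]M$.

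Next I would show that the proposed generators actually span $\Mod[\sellII]M$ by expressing every element of the canonical basis from Theorem~\ref{thm_Mbasis} for $\Mod[\sellII]M$ as an $\mathbb{F}_q[X]$-combination. Writing those canonical generators out, one sees the shape
\begin{align*}
 P\T 0_{s+1,\ell+1} &= G^{s+1}, \\
 P\T t_{s+1,\ell+1} &= G^{(s+1)-t}(Y-R)^t = P\T{t-1}_{s,\ell}\cdot(Y-R), \quad 1\leq t\leq s, \\
 P\T t_{s+1,\ell+1} &= Y^{t-s-1}(Y-R)^{s+1} = P\T{t-1}_{s,\ell}\cdot(Y-R), \quad s+1\leq t\leq \ell+1,
\end{align*}
so every canonical generator is either $G^{s+1}$ itself or equals $P\T j_{s,\ell}\cdot(Y-R)$ for some $0\leq j\leq \ell$. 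Since $B\T 0,\ldots,B\T \ell$ generate $\Mod[\sell]M$, each $P\T j_{s,\ell}$ is an $\mathbb{F}_q[X]$-combination of the $B\T i$, and multiplying that relation by $Y-R$ expresses $P\T j_{s,\ell}(Y-R)$ as an $\mathbb{F}_q[X]$-combination of the $B\T i(Y-R)$. Together with $G^{s+1}$ this shows the proposed set generates $\Mod[\sellII]M$.

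Finally, to claim ``basis'' rather than merely ``generating set'', I would note that the proposed set has exactly $\ell+2$ elements, matching the rank of the free $\mathbb{F}_q[X]$-module $\Mod[\sellII]M$ (which has a basis of $\ell+2$ elements by Theorem~\ref{thm_Mbasis}); hence any $\mathbb{F}_q[X]$-generating set of this cardinality is automatically a basis. I do not anticipate a real obstacle: the main step is the trivial algebraic identification $P\T t_{s+1,\ell+1}=P\T{t-1}_{s,\ell}\cdot(Y-R)$ for $t\geq 1$; the only thing to be careful about is the off-by-one bookkeeping between the two ranges $t<s$ and $s\leq t\leq \ell$ in the piecewise definition of the canonical basis, and treating the extra generator $G^{s+1}$ (which has no $(Y-R)$-factor) separately.
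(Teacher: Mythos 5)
Your proposal is correct and takes essentially the same route as the paper's proof: both hinge on the identity $P^{(t)}_{s+1,\ell+1} = P^{(t-1)}_{s,\ell}\cdot(Y-R)$ for $t\geq 1$, together with $P^{(0)}_{s+1,\ell+1}=G^{s+1}$, and then replace each $P^{(j)}_{s,\ell}$ by an $\Fqx$-combination of the $B^{(i)}$. You additionally spell out containment in $\Mod[\sellII]{M}$ and the cardinality-equals-rank argument for upgrading ``generating set'' to ``basis,'' which the paper leaves implicit, but the substance is the same.
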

\begin{proof}
  Denote by $P_{\sell}\T 0(X,Y),\ldots,P_{\sell}\T \ell(X,Y)$ the basis of $\Mod[\sell]M$ as given in Theorem~\ref{thm_Mbasis}, and by $P_{\sellII}\T 0(X,Y),\ldots,P_{\sellII}\T{\ell+1}(X,Y)$ the basis of $\Mod[\sellII] M$.
  Then observe that for $t > 0$, we have $P_{\sellII}\T t = P_{\sell}\T{t-1}(Y-R(X))$.
  Since the $B\T i(X,Y)$ form a basis of $\Mod[\sell]{M}$, each $P_{\sell}\T t$ is expressible as an $\Fqx$-combination of these, and thus for $t>0$, $P_{\sellII}\T t$ is expressible as an $\Fqx$-combination of the $B\T i(X,Y)(Y-R(X))$.
  Remaining is then only $P_{\sellII}\T 0(X,Y) = G^{s+1}(X)$.
  \qed
\end{proof}
As before, we can use the above with the basis $\M[\sell] B$ of $\Mod[\sell] M$ in weak Popov form, found in the previous iteration of our algorithm.
Remembering that multiplying by $Y$ translates to shifting one column to the right in the matrix representation, the following matrix thus represents a basis of $\Mod[\sellII]{M}$:
\begin{equation}\label{eqn_stepII}
  \M[\sellII]C\stepII =
    \left[\begin{array}{@{}c|c@{}}
        G^{s+1} & \vec 0 \\\hline\\[-.2cm]
        \vec 0^T & \makebox[1cm][c]{$\vec 0$}
          \\[0.1cm]
    \end{array}\right]
    +
    \left[\begin{array}{@{}c|c@{}}
        0 &  \vec 0 \\\hline\\[-.2cm]
        \vec 0^T & \makebox[1cm][c]{$\M[\sell]{B}$}
          \\[0.1cm]
    \end{array}\right]
    - R\cdot
    \left[\begin{array}{@{}c|c@{}}
        \vec 0 & 0 \\\hline\\[-.2cm]
        \makebox[1cm][c]{$\M[\sell]{B}$} & \vec 0^T
          \\[0.1cm]
    \end{array}\right].
\end{equation}
\begin{lem} 
  \label{lem:Cortho}
  $\OD{\M[\sellII]C\stepII \M[\ell+1] W}
  = (\ell+1)(\deg R - k + 1)
  \leq (\ell+1)(n-k)$.
\end{lem}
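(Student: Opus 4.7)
The plan is to mirror the structure of the proof of Lemma~\ref{lem_orthoI}: compute $\deg\det(\M[\sellII]C\stepII \M[\ell+1] W)$ and $\deg(\M[\sellII]C\stepII \M[\ell+1] W)$ separately, then subtract. Throughout, I can assume $\deg R \geq k$, as otherwise $\vec r$ is a codeword and decoding is trivial.

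For the determinant, first note that by construction the zeroth row of $\M[\sellII]C\stepII$ equals $(G^{s+1}, 0, \ldots, 0)$, so expansion along this row reduces the computation to the $(\ell+1)\times(\ell+1)$ minor obtained by deleting row and column $0$. A direct inspection of the three summands in~\eqref{eqn_stepII} shows that this minor factors as $\M[\sell]{B}\,T$, where $T$ is the lower-triangular matrix with $1$'s on the diagonal and $-R$'s on the first subdiagonal; hence $\det T = 1$. Combined with $\det \M[\ell+1] W = X^{(k-1)\binom{\ell+2}{2}}$, this yields $\deg \det(\M[\sellII]C\stepII \M[\ell+1] W) = (s+1)n + \deg\det \M[\sell]B + (k-1)\binom{\ell+2}{2}$.

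For the row-degree, row $0$ of the weighted matrix has degree $(s+1)n$. For row $i \geq 1$ (corresponding to row $\vec b_{i-1}$ of $\M[\sell]B$) the entries after weighting are $-R\,b_{i-1,0}$ in column $0$, then $(b_{i-1,j-1} - R\,b_{i-1,j})X^{j(k-1)}$ for $1 \leq j \leq \ell$, and $b_{i-1,\ell}X^{(\ell+1)(k-1)}$ in the last column. The key observation: since $\M[\sell]B \M[\ell]W$ is in weak Popov form, row $i-1$ has a unique leading position $p$ where $\deg b_{i-1,p} + p(k-1) = d_{i-1}$ strictly exceeds $\deg b_{i-1,j}+j(k-1)$ for any other $j$. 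Using $\deg R \geq k$, one checks that the $-R\,b_{i-1,p}$ contribution at column $p$ (or at column $0$ if $p = 0$) reaches degree $\deg R + d_{i-1}$, and that every other column in the row is bounded by $\max(d_{i-1}+(k-1),\, \deg R + d_{i-1}) = \deg R + d_{i-1}$, with no cancellation since the dominant term is simply $-R$ times the leading term of $\vec b_{i-1}\M[\ell] W$. Summing gives $\deg(\M[\sellII]C\stepII \M[\ell+1] W) = (s+1)n + (\ell+1)\deg R + \sum_{i=0}^{\ell} d_i$, and Lemma~\ref{lem_popovreduces} applied to $\M[\sell]B \M[\ell]W$ gives $\sum d_i = \deg\det \M[\sell]B + (k-1)\binom{\ell+1}{2}$. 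Subtracting and using $\binom{\ell+2}{2}-\binom{\ell+1}{2} = \ell+1$ yields $(\ell+1)(\deg R - k + 1)$, and $\deg R \leq n-1$ gives the stated upper bound.

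The main obstacle is the row-degree step: one must rule out cancellation in the critical column so that the $-R\,b_{i-1,p}$ term actually survives as the leading term, which is precisely where the weak Popov property of $\M[\sell]B\M[\ell]W$ plus $\deg R > k-1$ is used. The determinant step, by contrast, is essentially formal.
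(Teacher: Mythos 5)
Your proof is correct and follows the same strategy as the paper's: compute $\deg(\M[\sellII]C\stepII \M[\ell+1] W)$ and $\deg\det(\M[\sellII]C\stepII \M[\ell+1] W)$ separately, using $\deg R \geq k$ to identify the dominant term in each row and using that $\M[\sell]B\M[\ell]W$ has orthogonality defect zero. The only cosmetic difference is in the determinant step, where you factor the relevant minor as $\M[\sell]B\,T$ with $T$ unipotent lower-triangular, while the paper phrases the same fact as a sequence of determinant-preserving column operations.
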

\begin{proof}
  We compute $\deg(\M[\sellII]C\stepII \M[\ell+1] W)$ and $\deg \det(\M[\sellII]C\stepII \M[\ell+1] W)$.
  For the former, obviously the first row has degree $(s+1)n$.
  Let $\vec{b}_i$ denote the $i$th row of $\M[\sell] B$ and $\vec{b}'_i$ denote the $i$th row of $\M[\sell] B \M[\ell] W$.
  The $(i+1)$th row of $\M[\sellII]C\stepII \M[\ell+1]W$ has the form
  \[
    \big[(0 \mid \vec{b}_i) - R(\vec{b}_i \mid 0)\big]\M[\ell+1] W
      = (0 \mid \vec{b}'_i)X^{k-1} - R(\vec{b}'_i \mid 0).
  \]
  If and only if the received word is not a codeword, then $\deg R \geq k$.
  In this case, the leading term of $R \vec{b}'_i$ must have greater degree than any term in $X^{k-1}\vec{b}'_i$.
  Thus the degree of the above row is $\deg R + \deg \vec{b}'_i$.
  Summing up we get
  \begin{align*}
    \deg \M[\sellII]C\stepII &= (s+1)n + \sum_{i=0}^\ell \deg R + \deg \vec{b}'_i \\
    &= (s+1)n + (\ell+1)\deg R + \deg(\M[\sell]{B} \M[\ell] W).
  \end{align*}
  For the determinant, observe that
  \begin{align*}
    \det(\M[\sellII]C\stepII \M[\ell+1] W)
    &= \det(\M[\sellII]C\stepII) \det(\M[\ell+1] W) \\
    &= G^{s+1} \det \tilde{\M B} \det \M[\ell] W X^{(\ell+1)(k-1)},
  \end{align*}
  where $\tilde{\M B} = \M[\sell]{B} - R \left[ {\M[\sell] {\grave B}} \ \big|\  \vec 0^T \right]$ and $\M[\sell]{\grave B}$ is all but the zeroth column of $\M[\sell] B$.
  This means $\tilde{\M B}$ can be obtained by starting from $\M[\sell] B$ and iteratively adding the $(j+1)$th column of $\M[\sell] B$ scaled by $R(X)$ to the $j$th column, with $j$ starting from $0$ up to $\ell-1$.
  Since each of these will add a scaled version of an existing column in the matrix, this does not change the determinant.
  Thus, $\det \tilde{\M B} = \det \M[\sell] B$.
  But then $\det \tilde{\M B} \det \M[\ell] W = \det(\M[\sell] B \M[\ell] W)$ and so $\deg(\det \tilde{\M B}\det \M[\ell] W) = \deg(\M[\sell] B \M[\ell] W)$ by Lemma~\ref{lem_popovreduces} since $\M[\sell] B \M[\ell] W$ is in weak Popov form.
  Thus we get
  \[
    \deg \det(\M[\sellII]C\stepII \M[\ell+1] W) =
      (s+1)n + \deg(\M[\sell] B \M[\ell] W) + (\ell+1)(k-1).
  \]
  The lemma follows from the difference of the two calculated quantities.
  \qed
\end{proof}
\begin{cor}
  \label{cor_ComplMSII}
  The complexity of $\var{MicroStep2}(s,\ell,\M[\sell]B)$ is $O(\ell^4 n \log^2 n \log\log n)$.
\end{cor}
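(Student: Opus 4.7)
The plan is to mirror the proof of Corollary \ref{cor_ComplMSI}: I would simply invoke Lemma \ref{lem_alekcompl} on the matrix $\M[\sellII] C\stepII \M[\ell+1] W$, using the orthogonality defect bound furnished by Lemma \ref{lem:Cortho}.

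Concretely, first I would note the dimension: $\M[\sellII] C\stepII$ is $(\ell+2) \times (\ell+2)$, so in the notation of Lemma \ref{lem_alekcompl} we have $m = \ell+2 \in O(\ell)$, hence $m^3 \in O(\ell^3)$. Next, I would cite Lemma \ref{lem:Cortho} to conclude $\OD{\M[\sellII]C\stepII \M[\ell+1] W} \leq (\ell+1)(n-k) \in O(\ell n)$. Plugging these into the $O\bigl(m^3 \OD{\M V}\log^2 \OD{\M V} \log\log \OD{\M V}\bigr)$ bound of Lemma \ref{lem_alekcompl} gives $O\bigl(\ell^3 \cdot \ell n \cdot \log^2(\ell n)\log\log(\ell n)\bigr)$. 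Using the standing assumption $\ell \in O(n^2)$, we have $\log(\ell n) \in O(\log n)$, so this simplifies to the asserted $O(\ell^4 n \log^2 n \log\log n)$. This is exactly the pattern of the Corollary \ref{cor_ComplMSI} proof.

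The main thing I would want to double-check is the side condition "$N \in O(\OD{\M V})$" from Lemma \ref{lem_alekcompl}, where $N$ is the maximum entry degree of $\M[\sellII] C\stepII \M[\ell+1] W$. The $(0,0)$ entry is $G^{s+1}$ weighted by $1$, with degree $(s+1)n$, and $(s+1)n \in O(\ell n) = O(\OD{})$ since $s \leq \ell$. The remaining rows have the form $(0 \mid \vec b'_i)X^{k-1} - R(\vec b'_i \mid 0)$ with entry degrees bounded by $\deg R + \deg \vec b'_i \leq (n-1) + \deg(\M[\sell] B \M[\ell] W)$, and since $\M[\sell] B \M[\ell] W$ is in weak Popov form, its degree equals $\deg\det(\M[\sell] A \M[\ell] W) \in O(\ell^2 n)$. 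This is the one step that would need a bit of care, but it is still polynomial in $n$ and $\ell$, which is what the $\log\log$ and $\log$ factors of Alekhnovich's complexity tolerate (since $\log$ of any such quantity is $O(\log n)$ under $\ell, s \in O(n^2)$), so the stated bound still follows.

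Thus the proof is essentially a one-line invocation of Lemma \ref{lem_alekcompl} combined with Lemma \ref{lem:Cortho}, followed by a routine simplification of the logarithmic factors.
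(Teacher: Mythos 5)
Your overall approach is exactly right and matches the paper's intent: the paper gives no explicit proof of this corollary, but the parallel Corollary~\ref{cor_ComplMSI} is proven by a direct invocation of Lemma~\ref{lem_alekcompl} using the orthogonality-defect bound of the corresponding lemma, and the analogous argument with Lemma~\ref{lem:Cortho} and $m=\ell+2$ is what is intended here.

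The one place your write-up goes astray is the side-condition check $N \in O(\OD{\M V})$, and in two ways. First, the bound you derive for $N$ is too loose: you bound $\deg \vec b'_i$ by the \emph{sum} $\deg(\M[\sell]B\M[\ell]W)$, which is $O(\ell^2 n)$, whereas the quantity that matters is the maximum row degree, which is $O(\ell n)$. (One way to see this: a weak Popov form basis has the smallest row-degree multiset among all bases of the module, entrywise after sorting, so each row of $\M[\sell]B\M[\ell]W$ has degree at most the maximal row degree of $\M[\sell]A\M[\ell]W$; the latter is at most $\max\bigl(sn,\, s\deg R + (\ell-s)(k-1)\bigr) \in O(\ell n)$.) Second, and more importantly, the condition $N \in O(\OD{\M V})$ is not merely absorbed by the logarithmic factors: in the proof of Lemma~\ref{lem_alekcompl} it is precisely what makes the closing matrix product $\M U \M V$ fit inside the stated $O(m^3 \OD{\M V} \cdots)$ bound, since the entries of $\M U$ have degree $O(\OD{\M V})$ while those of $\M V$ have degree $N$. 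Under your loose bound $N = O(\ell^2 n)$ the matrix-multiplication cost would dominate and give $O(\ell^5 n \cdots)$, so your remark that ``the $\log$ factors tolerate it'' does not actually rescue the argument. With the sharper $N = O(\ell n)$ the claimed complexity does follow (for non-degenerate $R$, i.e. when $\deg R - k + 1 \in \Theta(n)$, which is also implicitly assumed in the paper). Since the paper's own proof of Corollary~\ref{cor_ComplMSI} does not verify the side condition at all, you are being more scrupulous than the source, but the particular justification you offered for that scrupulousness is the part that needs repair.
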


\begin{inJournal}
\begin{exa}
  \label{ex_GRSrevisited}
\todo{/Johan: Is this example not way too long? I understand the desire for an example but perhaps this is going into too much detail.
One could remove all comparisons with A-matrices, and one could remove the structural form of C[1,2]. The example will still be quite long. It will not get shorter if we also want to include explicit Q-polynomials.
Otherwise, I completely agree that it should be for a particular example.}
We consider again the \RS{16}{4} code of Example~\ref{ex_rs164param}.
We will consider $\tau = 8$ as our goal and therefore the permissible triple $ (2,4,8)$ as the maximal parameter choice. 
To maximise decoding radius during the procedure, we could choose the following
sequence for $(s,\ell,\tau)$:
\todo{/Johan: A command list should be specified here}
\begin{equation*}
(1,1,6) \rightarrow (1,2,7) \rightarrow (2,3,7) \rightarrow (2,4,8).
\end{equation*}
That is, our micro-steps will by of type I, II, and I in that order.
Let us depict the matrices in the decoding process of these three steps. For
some matrix $\M{B}=\|b_{i,j}\|$ over $\Fqx$, we will write $\M{B} \preceq \M{\hat B}$ for some matrix
$\M{\hat B}$ over $\ZZ$ of the same dimensions, if $\deg b[i,j] \leq \hat b_{i,j}$ for all $i,j$. Let $\bot = -1$, i.e., in use with $\preceq$, it denotes that only the zero polynomial is allowed at this place.

To begin with, we have:
\begin{align*}
\M[1,1]{A} = \left(\begin{array}{rr}
                G & 0 \\
                -R & 1
              \end{array}\right) \quad
\M[1,1]{A} \preceq \left(\begin{array}{rr}
      16 & \bot \\
      15 & 0
  \end{array}\right)  \quad
\M[1,1]{A} \M[1]{W} \preceq \left(\begin{array}{rr}
      16 & \bot \\
      15 & 3
  \end{array}\right)
\end{align*}
according to~\eqref{eq_BasisOfModul} and~\eqref{eq_DiagWeight}.
We then apply Alekhnovich's algorithm on $\M[1,1]{A} \M[1]W$ to obtain $\M[1,1]{B} \M[1]{W}$ which is in weak Popov form.
From this we can easily scale down the columns again to obtain $\M[1,1]{B}$.

We get $\M[1,1]{A} \M[1]{W} = \deg(\M[1,1]{A}\M[1]{W}) - \deg\det(\M[1,1]{A} \M[1]{W}) = (2n-1) -n = 15$, so we expect to apply at most $r(\OD{\M[1,1]{A}}+r) = 34$ simple transformations before reaching $\M[1,1]{B} \M[1]{W}$. 

\todo{A2J: Here the complexity should be bounded without using the number of simple transformations. /Johan: I don't think we should say anything about complexity here, since all our complexity assertations stem from the lemma on Alekhnovich we just pulled from his paper. Instead, just remove all such comments and comparisons with A matrices (except of course A[1,1]}

We know that there must be a $Q(X,Y) \in \Mod[1,1]{M}$ and so a row of less than this degree must be in $\M[1,1]{B}$. Furthermore, since it is in weak Popov form, each row will have different leading positions. A good guess at its approximate degrees would then be:
\begin{align*}
  \M[1,1]{B} \preceq \left(\begin{array}{rr}
    9 & 6 \\
    10 & 6
  \end{array}\right) \quad \quad
  \M[1,1]{B}\M[1]{W} \preceq \left(\begin{array}{rr}
    9 & 9 \\
    10 & 9
  \end{array}\right)
\end{align*}
\todo{A2J: For a specific example we should give here the exact degrees... /Johan: Yes and therefore shorten the blabbering :-)}

We would most likely want to factor now, so we pick a row in $\M[1,1]{B}$ which
has weighted degree less than $10$, say $(\yC Q 0(X), \yC Q 1(X))$, and construct $Q(X,Y) = \yC Q 0(X) + Y\yC Q 1(X)$. 

\todo{A2J: For a specific example we should give $Q(X,Y)$ explicitly and show that root-finding for $\tau=6$ fails.}

In case this fails, we move to the next step, i.e.,  $(s,\ell, \tau) = (1,2,7)$. From~\eqref{eqn_stepI}, we get
\begin{align*}
  \M[1,2]C\stepI  =
    \left[\begin{array}{r}
            \begin{array}{@{}c|c@{}}
        \makebox[1cm][c]{$\M[1,1]{B}$}
        & 
        \begin{matrix} 0 \\ 0 \end{matrix}
      \end{array}
      \\ \hline \\[-.4cm]
      \begin{matrix}
        0 & -R & 1
      \end{matrix}
    \end{array}\right]
\end{align*}
and so
\begin{align*}
      \M[1,2]C\stepI \preceq \left(\begin{matrix}
      9 & 6 & \bot \\
      10 & 6 & \bot \\
      \bot & 15 & 0
    \end{matrix}\right) \quad \quad
      \M[1,2]C\stepI\M[2]{W} \preceq \left(\begin{matrix}
      9 & 9 & \bot \\
      10 & 9 & \bot \\
      \bot & 18 & 6
    \end{matrix}\right)
\end{align*}
Row-reduction essentially ``balances'' the row-degrees such that they are all roughly of the same size, and the complexity for performing this reduction is in the size of the ``unbalancedness''. It is clear that the above is more
balanced in row-degrees than finding $\M[1,2]{B}$ using the straightforward basis
$\M[1,2]{A}$:
\begin{align*}
    \M[1,2]{A} \preceq \left(\begin{matrix}
    16 & \bot & \bot \\
    15 & 0 & \bot \\
    \bot & 15 & 0
  \end{matrix}\right) \quad \quad
    \M[1,2]{A} \M[2]{W} \preceq \left(\begin{matrix}
    16 & \bot & \bot \\
    15 & 3 & \bot \\
    \bot & 18 & 6
  \end{matrix}\right)
\end{align*}
Running Alekhnovich's on $\M[1,2]C\stepI\M[2]{W}$, we obtain $\M[1,2]{B} \M[2]{W}$. By Lemma~\ref{lem_orthoI} and Corollary~\ref{cor_ComplMSI}, this should happen after at most $r(\OD{D_{1,2}\M[2]{W}}+r) = 3(s(n-k)+3) = 45$ simple transformations.
\todo{A2J: Here the complexity should be bounded without using the number of simple transformations. /Johan: The whole block above should be removed since it was written to convice you}

By reasoning as before, we know that at least one row in this matrix must have
row-degree less than $9$. Thus we have
\begin{align*}
    \M[1,2]{B} \preceq \left(\begin{matrix}
    8 & 5 & 2 \\
    9 & 5 & 2 \\
    9 & 6 & 2
  \end{matrix}\right) \quad \quad
    \M[1,2]{B} \M[2]{W} \preceq \left(\begin{matrix}
    8 & 8 & 8 \\
    9 & 8 & 8 \\
    9 & 9 & 8
  \end{matrix}\right).
\end{align*}

\todo{A2J: Factoring in this step.}

The next permissible triple is $(2,4,8)$ \todo{/Johan: No its (2,3,7)?} and the next micro-step is of type II so by~\eqref{eqn_stepII}, we construct:
\begin{align*}
  \M[2,3]C\stepII \preceq \left(\begin{matrix}
      32 & \bot & \bot & \bot \\
      23 & 20 & 17 & 2 \\
      24 & 20 & 17 & 2 \\
      23 & 21 & 17 & 2
  \end{matrix}\right) \quad \quad
  \M[2,3]C\stepII \M[3]{W} \preceq \left(\begin{matrix}
      32 & \bot & \bot & \bot \\
      23 & 23 & 23 & 11 \\
      24 & 23 & 23 & 11 \\
      23 & 24 & 23 & 11
  \end{matrix}\right)
\end{align*}
Once again, this is clearly much more ``balanced'' than the straightforward basis $\M[2,3]{A}$:
\begin{align*}
    \M[2,3]{A} = \left(\begin{matrix}
      G^{2} & 0 & 0 & 0 \\
      -G R & G & 0 & 0 \\
      R^{2} & -2 \, R & 1 & 0 \\
      0 & R^{2} & -2 \, R & 1
  \end{matrix}\right) \quad \quad
    \M[2,3]{A} \preceq \left(\begin{matrix}
      32 & \bot & \bot & \bot \\
      31 & 16 & \bot & \bot \\
      30 & 15 & 0 & \bot \\
      \bot & 30 & 15 & 0
  \end{matrix}\right) 
\end{align*}
By Lemma \ref{lem:Cortho}, the number of simple transformations that Alekhnovich's algorithm will apply will be upper bounded by
$r(\OD{C_{2,3}W_3}+r) = 4(3(n-k) + 4) = 160$. 

\todo{A2J: Complexity without simple transformations.}

After row-reduction, we obtain $\M[2,3]{B} \M[3]{W}$, and again know that there will be a row with degree less than $18$, and so:
\begin{align*}
    \M[2,3]{B} \preceq \left(\begin{matrix}
      17 & 14 & 11 & 8 \\
      18 & 14 & 11 & 8 \\
      18 & 15 & 11 & 8 \\
      18 & 15 & 12 & 8
  \end{matrix}\right) \quad \quad
    \M[2,3]{B} \M[3]{W} \preceq \left(\begin{matrix}
      17 & 17 & 17 & 17 \\
      18 & 17 & 17 & 17 \\
      18 & 18 & 17 & 17 \\
      18 & 18 & 18 & 17
  \end{matrix}\right)
\end{align*}
Since the decoding radius did not increase, it never makes sense to factor here. Therefore, we continue with the last micro-step of type II and get:
\begin{align*}
    \M[2,4]C\stepI \preceq \left(\begin{matrix}
      17 & 14 & 11 & 8 & \bot \\
      18 & 14 & 11 & 8 & \bot \\
      17 & 15 & 11 & 8 & \bot \\
      17 & 14 & 12 & 8 & \bot \\
      \bot & \bot & 30 & 15 & 0
  \end{matrix}\right) & \quad \quad
    \M[2,4]C\stepI \M[4]{W} \preceq \left(\begin{matrix}
      17 & 17 & 17 & 17 & \bot \\
      18 & 17 & 17 & 17 & \bot \\
      17 & 18 & 17 & 17 & \bot \\
      17 & 17 & 18 & 17 & \bot \\
      \bot & \bot & 36 & 24 & 12
  \end{matrix}\right) \\
  \M[2,4]{B} \preceq \left(\begin{matrix}
      15 & 12 & 9 & 6 & 3 \\
      16 & 12 & 9 & 6 & 3 \\
      16 & 13 & 9 & 6 & 3 \\
      16 & 13 & 10 & 6 & 3 \\
      16 & 13 & 10 & 7 & 3
  \end{matrix}\right) & \quad \quad
  \M[2,4]{B} \M[4]{W} \preceq \left(\begin{matrix}
      15 & 15 & 15 & 15 & 15 \\
      16 & 15 & 15 & 15 & 15 \\
      16 & 16 & 15 & 15 & 15 \\
      16 & 16 & 16 & 15 & 15 \\
      16 & 16 & 16 & 16 & 15
  \end{matrix}\right).
\end{align*}
Reducing $\M[2,4]C\stepI \M[4]{W}$ should take at most $r(\OD{D_{2,4}\M[4]{W}}+r) = 5(2(n-k)+
5) = 145$ simple transformations. And compare again with
\begin{align*}
    \M[2,4]{A} = \left(\begin{matrix}
      G^{2} & 0 & 0 & 0 & 0 \\
      -G R & G & 0 & 0 & 0 \\
      R^{2} & -2 \, R & 1 & 0 & 0 \\
      0 & R^{2} & -2 \, R & 1 & 0 \\
      0 & 0 & R^{2} & -2 \, R & 1
  \end{matrix}\right) \quad \quad
  \M[2,4]{A} \preceq \left(\begin{matrix}
      32 & \bot & \bot & \bot & \bot \\
      31 & 16 & \bot & \bot & \bot \\
      30 & 15 & 0 & \bot & \bot \\
      \bot & 30 & 15 & 0 & \bot \\
      \bot & \bot & 30 & 15 & 0
  \end{matrix}\right).
\end{align*}
is much...

\todo{need to be finished with a nice sentence...}
\end{exa}
\end{inJournal}

\subsection{Complexity Analysis} \label{ssec_complanalysis}

Using the estimates of the two preceding subsections, we can make a rather precise worst-case asymptotic complexity analysis of our multi-trial decoder.
The average running time will depend on the exact choice of $\var C$ but we will see that the worst-case complexity will not.
First, it is necessary to know the complexity of performing a root-finding attempt.
\begin{lem}[Complexity of Root-Finding] \label{lem_rootCompl}
  Given a polynomial $Q(X,Y) \in \Fqx[Y]$ of $Y$-degree at most $\ell$ and $X$-degree at most $N$, there exists an algorithm to find all $\Fqx$-roots of complexity $O\big(\ell^2N\log^2 N\log\log N\big)$, assuming $\ell, q \in O(N)$.
\end{lem}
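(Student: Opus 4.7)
The plan is to invoke the divide-and-conquer variant of the Roth--Ruckenstein root-finding procedure due to Alekhnovich~\cite{alekhnovich05}, already referenced in the introduction of the paper. Recall the basic Roth--Ruckenstein algorithm~\cite{rothRuckenstein00}: a candidate root $f(X) \in \Fqx$ is built as a power series in $X$ coefficient by coefficient; given a computed prefix $(f_0,\ldots,f_{i-1})$, the admissible next coefficient $f_i \in \Fq$ is any root of the univariate polynomial $Q_i(0,Y) \in \Fq[Y]$, where $Q_i(X,Y)$ is obtained from $Q(X,Y)$ by substituting $Y \mapsto f_0 + \cdots + f_{i-1}X^{i-1} + X^i Y$ and dividing out the largest possible power of $X$. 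This expansion unfolds into a rooted tree whose leaves correspond to the distinct $\Fqx$-roots of $Q$; since $Q$ has $Y$-degree at most $\ell$, the number of leaves, and thus the number of live branches at any depth, is at most $\ell$.

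The key speed-up is to process not one but $N/2$ coefficients at a time by recursive halving: instead of peeling off $f_i$ alone, first recursively compute all candidate low halves $f^{\mathrm{lo}} = f \bmod X^{N/2}$ by working with $Q(X,Y) \bmod X^{N/2}$; then, for each surviving $f^{\mathrm{lo}}$, perform a single bivariate substitution and truncation to obtain a new bivariate polynomial of $X$-degree at most $N/2$, and recurse to find the top half. Using fast univariate multiplication in $\Fqx$ of cost $M(N) = O(N \log N \log\log N)$, the bivariate substitution and truncation at a node of $X$-degree $N'$ and $Y$-degree at most $\ell$ cost $O(\ell \cdot M(N'))$ operations in $\Fq$.

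To conclude I would tally the work through the recursion tree. Its depth is $O(\log N)$; at any level, at most $\ell$ branches are alive (one per eventual root), each contributing $O(\ell \cdot M(N_j))$ of substitution cost at the corresponding halved size $N_j$. Summing the geometric series over the levels gives a total of $O(\ell^2 M(N) \log N) = O(\ell^2 N \log^2 N \log\log N)$ as claimed. The univariate root extractions performed at each branching point --- required to obtain the admissible next coefficients from $Q_i(0,Y) \in \Fq[Y]$ --- also fit comfortably within this budget using standard equal-degree factorisation algorithms, provided $\ell, q \in O(N)$ as assumed.

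The main obstacle is the global accounting that keeps branching bounded by $\ell$ at every depth of the recursion: locally, an internal node may spawn up to $\deg Q_i(0,Y) \leq \ell$ children, but globally each surviving path corresponds to a distinct $\Fqx$-root of $Q$, so the sum of live branches at any level is at most $\ell$. This accounting is carried out in detail by Alekhnovich~\cite{alekhnovich05}, from which the stated complexity then follows directly.
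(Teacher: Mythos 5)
Your proposal takes the same route as the paper: the paper's proof is a one-line citation to Roth--Ruckenstein combined with Alekhnovich's divide-and-conquer speed-up, with the caveat that Alekhnovich's own complexity analysis "needs to be slightly improved" and a pointer to~\cite{nielsen12} for the amendment. You expand this into a fuller algorithmic sketch, which is fine, but you attribute the tightened accounting entirely to Alekhnovich and do not flag that his stated bound is not quite the one claimed here --- the very point the paper is careful to acknowledge.

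There is also an internal inconsistency in your tally through the recursion tree. You argue that at level $j$ at most $\ell$ branches survive, each incurring $O(\ell\,M(N_j))$ substitution cost with $N_j = N/2^j$, and then say that "summing the geometric series over the levels gives $O(\ell^2 M(N)\log N)$." But if the per-level cost genuinely halved as $N_j$ does, the geometric series would collapse to $O(\ell^2 M(N))$ with \emph{no} extra $\log N$; a geometric sum and a $\log N$ factor cannot both be right. The $\log N$ in the stated bound appears precisely because the per-level work does \emph{not} decay geometrically --- the total work at each of the $O(\log N)$ levels stays on the order of $\ell^2 M(N)$ --- and pinning down why this is the sharp bound (and not worse) is exactly the delicate accounting that the paper defers to~\cite{nielsen12}. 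Your conclusion matches the lemma, but the intermediate justification is not a valid derivation of it.
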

\begin{proof}
  We employ the Roth--Ruckenstein~\cite{rothRuckenstein00} root-finding algorithm together with the divide-and-conquer speed-up by Alekhnovich~\cite{alekhnovich05}.
  The complexity analysis in~\cite{alekhnovich05} needs to be slightly improved to yield the above, but see \cite{nielsen12} for easy amendments.
\end{proof}

\begin{thm}[Complexity of Algorithm~\ref{alg_multitrial}] \label{thm_ComplAlg}
  For a given $\RS n k$ code, as well as a given list of steps $\var C$ for Algorithm~\ref{alg_multitrial} with ultimate parameters $(s, \ell, \tau)$, the algorithm has worst-case complexity $O(\ell^4 s n \log^2 n \log\log n)$, assuming $q \in O(n)$.
\end{thm}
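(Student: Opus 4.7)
The plan is to partition the total cost into three parts---preprocessing, the $\ell - 1$ micro-steps of the main loop, and the root-finding attempts---and show that the micro-step phase dominates. In the worst case every root-finding returns $\emptyset$, so the full list $\var C$ is traversed; by hypothesis $\var C$ contains exactly $\ell - s$ instances of $\var S_1$ and $s - 1$ instances of $\var S_2$. Preprocessing is easy to dispose of: constructing $G(X) = \prod_i (X - \alpha_i)$ and the Lagrange interpolant $R(X)$ takes $O(n \log^2 n \log \log n)$ by standard fast polynomial arithmetic, and the initial reduction of $\M[1,1]{A} \M[1]{W}$ has orthogonality defect in $O(n)$ by essentially the argument of Lemma~\ref{lem_orthoI}, so Lemma~\ref{lem_alekcompl} gives $O(n \log^2 n \log \log n)$ for this phase as well, which is dwarfed by the target bound.

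For the micro-step phase I would plug the intermediate parameters $(\is, \iell)$, which satisfy $\is \le s$ and $\iell \le \ell$ at every iteration, into the per-step estimates of Corollaries~\ref{cor_ComplMSI} and~\ref{cor_ComplMSII}: a single MicroStep1 call costs $O(\ell^3 s n \log^2 n \log \log n)$ and a single MicroStep2 call costs $O(\ell^4 n \log^2 n \log \log n)$. Summing over the $\ell - s$ type-I steps and the $s - 1$ type-II steps prescribed by $\var C$, and using $s \le \ell$ to absorb the type-II contribution, the total cost of the micro-step phase is $O(\ell^4 s n \log^2 n \log \log n)$---precisely the target.

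For each root-finding attempt I would first bound the $X$-degree of the polynomial $Q$ returned by $\var{MinimalWeightedRow}$: by Corollary~\ref{cor_WeightSol} together with Theorem~\ref{thm_GSproblem}, $Q$ has $(1, k-1)$-weighted degree strictly below $\is(n - \itau)$, and hence $X$-degree below $\is n \le s n$. Feeding $N = s n$ and $Y$-degree $\iell \le \ell$ into Lemma~\ref{lem_rootCompl}, and using the hypothesis $q, s \in O(n^2)$ to collapse $\log(s n)$ into $O(\log n)$, each attempt runs in $O(\ell^2 s n \log^2 n \log \log n)$. Since consecutive Root entries between the same pair of micro-steps are redundant (they produce the same $Q$), there are at most $\ell$ useful attempts, giving a total root-finding cost of $O(\ell^3 s n \log^2 n \log \log n)$, again dominated by the micro-step cost. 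The main obstacle is this $X$-degree bound on the extracted $Q$ in terms of the intermediate parameters; once it is in hand, the remaining argument is a routine summation of the per-step estimates of Sections~\ref{ssec_steptypeone} and~\ref{ssec_steptypetwo}.
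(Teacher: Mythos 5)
Your proposal follows essentially the same decomposition and uses the same lemmas as the paper's own proof: preprocessing via fast interpolation and one Alekhnovich reduction, summing the per-step bounds of Corollaries~\ref{cor_ComplMSI} and~\ref{cor_ComplMSII} over the $\ell-s$ type-I and $s-1$ type-II steps after relaxing intermediate parameters to $(s,\ell)$, and bounding the at most $\ell$ root-finding calls through the weighted-degree estimate $\is(n-\itau) < sn$ fed into Lemma~\ref{lem_rootCompl}. The argument is correct and matches the paper's.
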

\todo{A2J: I would prefer some "names" for the Lemma and the Theorem, but this is a old story... /Johan: I don't care much but prefer not to have them. Especially when the names are not very obvious or reference to well-known ones. I have removed many of yours already in the name of compacting. I don't think we should introduce them now --- we're still pressed on space.}

\begin{proof}
  The worst-case complexity corresponds to the case that we do not break early but run through the entire list $\var C$.
  Precomputing $\M[\sell] A$ using Lagrangian interpolation can be performed in $O(n\log^2 n\log\log n)$, see e.g. \cite[p. 235]{gathen}, and reducing to $\M[\sell] B$ is in the same complexity by Lemma~\ref{lem_alekcompl}.
  
  Now, $\var C$ must contain exactly $\ell-s$ $\var S_1$-elements and $s-1$ $\var S_2$-elements.
  The complexities given in Corollaries~\ref{cor_ComplMSI} and \ref{cor_ComplMSII} for some intermediate $\isell$ can be relaxed to $s$ and $\ell$.
  Performing $O(\ell)$ micro-steps of type I and $O(s)$ of type II is therefore in $O(\ell^4 s n \log^2 n\log\log n)$.

  It only remains to count the root-finding steps.
  Obviously, it never makes sense to have two $\var{Root}$ after each other in $\var C$, so after removing such possible duplicates, there can be at most $\ell$ elements $\var{Root}$. 
  When we perform root-finding for intermediate $\isell$, we do so on a polynomial in $\Mod[\isell] M$ of minimal weighted degree, and by the definition of $\Mod[\isell] M$ as well as Theorem \ref{thm_GSproblem}, this weighted degree will be less than $\is(n-\itau) < sn$.
  Thus we can apply Lemma \ref{lem_rootCompl} with $N = sn$.
  \qed
\end{proof}
The worst-case complexity of our algorithm is equal to the average-case complexity of the Beelen--Brander~\cite{beelenBrander} list decoder. However, Theorem~\ref{thm_ComplAlg} shows that we can choose as many intermediate decoding attempts as we would like without changing the worst-case complexity.
One could therefore choose to perform a decoding attempt just after computing $\M[1,1] B$ as well as every time the decoding radius has increased.
The result would be a decoding algorithm finding all \emph{closest} codewords within some ultimate radius $\tau$.
If one is working in a decoding model where such a list suffices, our algorithm will thus have much better average-case complexity since fewer errors occur much more frequently than many.

\section{Conclusion} \label{sec_concl}
An iterative interpolation procedure for list decoding GRS codes based on Alekhnovich's module minimisation was proposed and shown to have the same worst-case complexity as Beelen and Brander's \cite{beelenBrander}.
We showed how the target module used in Beelen--Brander can be minimised in a progressive manner, starting with a small module and systematically enlarging it, performing module minimisation in each step.
The procedure takes advantage of a new, slightly more fine-grained complexity analysis of Alekhnovich's algorithm, which implies that each of the module refinement steps will run fast.

The main advantage of the algorithm is its granularity which makes it possible to perform fast multi-trial decoding: we attempt decoding for progressively larger decoding radii, and therefore find the list of codewords closest to the received.
This is done without a penalty in the worst case but with an obvious benefit in the average case.

\subsection*{Acknowledgement}
The authors thank Daniel Augot for fruitful discussions. 
This work has been supported by German Research Council “Deutsche Forschungsgemeinschaft” (DFG) under grant BO 867/22-1. Johan S. R. Nielsen also gratefully acknowledges the support from The Otto M\o{}nsted Foundation and the Idella Foundation.

\bibliographystyle{spmpsci}
\bibliography{../../../bibtex}

\end{document}